\numberwithin{equation}{section}
\theoremstyle{plain}
\newtheorem{theorem}{Theorem}[section]
\newtheorem{proposition}[theorem]{Proposition}
\newtheorem{lemma}[theorem]{Lemma}
\theoremstyle{definition}
\newtheorem{definition}[theorem]{Definition}
\newtheorem{remark}[theorem]{Remark}
\newcommand\restr[2]{{
  \left.\kern-\nulldelimiterspace #1 \right|_{#2} 
}}
\newcommand{\R}{\mathbb{R}}
\renewcommand{\d}{\mathrm{d}}
\newcommand{\df}{\Omega}
\newcommand{\Cinfty}{\mathscr{C}^\infty}
\newcommand{\T}{\mathrm{T}}
\newcommand{\Tan}{\mathrm{T}}
\newcommand{\cT}{\mathrm{T}^\ast}
\newcommand*{\inn}[1]{\iota_{#1}}
\newcommand{\Lie}{\mathscr{L}}
\newcommand{\vf}{\mathfrak{X}}
\newcommand*{\bd}{\overline{\mathrm{d}}}
\newcommand{\bfX}{\mathbf{X}}
\newcommand{\bfY}{\mathbf{Y}}
\newcommand{\parder}[2]{\frac{\partial #1}{\partial #2}}
\newcommand{\tparder}[2]{\partial #1/\partial #2}
\DeclareMathOperator{\Diff}{Diff}
\DeclareMathOperator{\rk}{rk}
\DeclareMathAlphabet{\mathpzc}{OT1}{pzc}{m}{it}
\def\d{\mathrm{d}}
\newcommand{\dfn}[1]{\textsl{\textbf{{#1}}}}
\begin{document}

%%%%%%%%%%%%%%%%%%%%

\vspace{3em}

{\huge\sffamily\raggedright Symmetries and Noether's theorem\\[.5ex]for multicontact field theories
}
\vspace{2em}

{\large\raggedright
    \today
}

\vspace{3em}

{\Large\raggedright\sffamily
    Xavier Rivas
}\vspace{1mm}\newline
{\raggedright
    Department of Computer Engineering and Mathematics, Universitat Rovira i Virgili\\
    Avinguda Països Catalans 26, 43007 Tarragona, Spain\\
    e-mail: \href{mailto:xavier.rivas@urv.cat}{xavier.rivas@urv.cat} --- {\sc orcid}: \href{https://orcid.org/0000-0002-4175-5157}{0000-0002-4175-5157}
}

\medskip

{\Large\raggedright\sffamily
    Narciso Román-Roy
}\vspace{1mm}\newline
{\raggedright
    Department of Mathematics, Universitat Politècnica de Catalunya\\
    C. Jordi Girona 31, 08034 Barcelona, Spain\\
    e-mail: \href{mailto:narciso.roman@upc.edu}{narciso.roman@upc.edu } --- {\sc orcid}: \href{https://orcid.org/0000-0003-3663-9861}{0000-0003-3663-9861}
}

\medskip

{\Large\raggedright\sffamily
    Bartosz M. Zawora
}\vspace{1mm}\newline
{\raggedright
    Department of Mathematical Methods in Physics, University of Warsaw, \\ ul. Pasteura 5, 02-093, Warszawa, Poland.\\
    e-mail: \href{mailto:b.zawora@uw.edu.pl}{b.zawora@uw.edu.pl} --- {\sc orcid}: \href{https://orcid.org/0000-0003-4160-1411}{0000-0003-4160-1411}
}

\vspace{3em}

{\large\bf\raggedright
    Abstract
}\vspace{1mm}\newline
{\raggedright
A geometric framework, called {\sl multicontact geometry}, has recently been developed to study action-dependent field theories. In this work, we use this framework to analyze symmetries in action-dependent Lagrangian and Hamiltonian field theories, as well as their associated dissipation laws. Specifically, we establish the definitions of conserved and dissipated quantities, define the general symmetries of the field equations and the geometric structure, and examine their properties. The latter ones, referred to as {\sl Noether symmetries}, lead to the formulation of a version of Noether's Theorem in this setting, which associates each of these symmetries with the corresponding dissipated quantity and the resulting conservation law.
}

\vspace{3em}
%\bigskip

{\large\bf\raggedright
    Keywords:} 
Lagrangian and Hamiltonian field theories, Action-dependent theories, Multisymplectic and Multicontact geometry, Conserved and dissipated quantities, Conservation and dissipation laws, Symmetries, Noether Theorem.  
\medskip

{\large\bf\raggedright
MSC2020 codes}:
{\it Primary}: 53D42, 70S20.  {\it Secondary}: 35B06, 53D10, 53Z05, 70S10.

\medskip
% ### Primary:
% - **53C15**: **General Geometric Structures on Manifolds**  
%   Estudio de estructuras geométricas en variedades, como G-estructuras y estructuras integrables.

% - **53C12**: **Foliations (Differential Geometry)**  
%   Análisis de foliaciones, que son descomposiciones en subvariedades dentro de la geometría diferencial.

% - **58A30**: **Vector Distributions (Subbundles of the Tangent Bundle)**  
%   Estudio de distribuciones vectoriales, subfibrados del fibrado tangente en el contexto de análisis global.

% ### Secondary:
% - **53C10**: **G-Structures**  
%   Análisis de G-estructuras relacionadas con simetrías geométricas y grupos de transformaciones.

% - **53D10**: **Contact Manifolds, General**  
%   Estudio de variedades de contacto, que son análogas a las variedades simplécticas, pero en dimensión impar.

% - **58J60$$: **Relations of PDEs with special manifold structures**

%%%%%%%%%%%%%%%%%%%%
\noindent {\bf Authors' contributions:} All authors contributed to the study conception and design. The manuscript was written and revised by all authors. All authors read and approved the final version.
\medskip

\noindent {\bf Competing Interests:} The authors have no competing interests to declare. 

\newpage

{\setcounter{tocdepth}{2}
\def\baselinestretch{1}
\small
% hack per a eliminar l'espai vertical 1em
\def\addvspace#1{\vskip 1pt}
\parskip 0pt plus 0.1mm
\tableofcontents
}

\pagestyle{fancy}

% Custom header and footer content
\fancyhead[L]{Noether's theorem for multicontact field theories --}    % Left-aligned header
\fancyhead[C]{}                  % Center header (empty)
\fancyhead[R]{X. Rivas, N. Román-Roy and B.M. Zawora}       % Right-aligned header

\fancyfoot[L]{}     % Left-aligned footer
\fancyfoot[C]{\thepage}                  % Center footer (empty)
\fancyfoot[R]{}            % Right-aligned footer with date

\setlength{\headheight}{17pt}

% Add horizontal line under header
\renewcommand{\headrulewidth}{0.1pt}  % Thickness of the header line (0.4pt)
\renewcommand{\footrulewidth}{0pt}    % No footer line

% Redefine the horizontal line with space above it
\renewcommand{\headrule}{%
    \vspace{3pt}                % Space between header text and the line
    \hrule width\headwidth height 0.4pt % The horizontal line (0.4pt thick)
    \vspace{0pt}                % Additional space below the line (if needed)
}

\setlength{\headsep}{30pt}  % Space between the header and main content

%%%%%%%%%%%%%%%%%%%%%%%%%%%%%%%%%%%%%%%%%%%%%%%%%%%%%%%%
\section{Introduction}
%%%%%%%%%%%%%%%%%%%%%%%%%%%%%%%%%%%%%%%%%%%%%%%%%%%%%%%%

Action-dependent theories in physics are extensions of standard dynamical and classical field theories, which exhibit properties that make them particularly interesting.
They are characterized by the fact that their Lagrangian or Hamiltonian functions incorporate additional variables that are related to the action
and lead to the emergence of new terms in the field or dynamical equations. 
These extra terms have been used to model dissipative behavior; although the applications of these theories extend far beyond that scope.
Related to this aspect is the non-conservative nature of these kinds of systems; however, despite this, the deviation from conservation is clearly regulated.
In fact, in this framework, Noether's Theorem links symmetries to dissipated quantities, which, although not conserved, follow a well-established behavior.

In recent years, there has been growing interest in developing a geometrical framework to study mechanical systems that exhibit dissipation.
This framework is {\sl contact geometry} \cite{BH_16,Gei_08,Kho_13},
which has been applied to describe dissipative Hamiltonian \cite{BCT_17,LL_19,GG_22} and  Lagrangian systems \cite{CCM_18,LL_19a,GGMRR_20a},
and has also been applied to other various fields, such as
thermodynamics, quantum mechanics, circuit theory, or control theory, 
\cite{Bra_18,CCM_18,Kho_13}

The same dissipative phenomena arise in classical field theories. Drawing inspiration from contact geometry, a geometric framework called {\sl multicontact geometry} has been recently introduced, providing a systematic and rigorous way to set the Lagrangian and Hamiltonian formalisms for action-dependent field theories. 
This subject has been recently developed in \cite{LGMRR_23} (although a previous, different approach was also presented in \cite{Vit_15}).
This structure and its application to describe these theories is based on the {\sl multisymplectic formulation} of classical field theory.
The literature on this subject is vast; a few among the many works available are as follows: for instance, 
see \cite{AA_80,Gar_74,Sau_89} for the Lagrangian formalism,
\cite{CCI_91,FPR_05,GS_73,HK_02} for the Hamiltonian formalism,
\cite{LMM_96a} for the singular case, and
\cite{GIMM_98,Rom_09,RW_19} as general references
(see also \cite{LSV_15,GMS_97,Kan_98} for other geometric frameworks; the so-called {\sl polysymplectic} and {\sl $k$-symplectic} formulations). Additionally, an alternative but related approach to action-dependent field theories involves the so-called {\sl k-contact structures} \cite{LRS_24, GGMRR_20, GGMRR_21}.

As it is well-known, symmetries play a crucial role in studying dynamical systems and field theories. Exploring symmetries often involves analyzing ordinary and partial differential equations, as they encode conservation laws and conserved quantities that reveal fundamental properties of physical systems. 
As a cornerstone in this field, Emmy Noether's groundbreaking work \cite{Noe_18} established a deep connection between certain kinds of symmetries and conservation principles (see also \cite{Kos_11}, for a modern perspective). 
Initially, symmetry in physical systems was understood as the invariance of the equations of motion under specific transformations in phase space, but, in modern geometric formulations of classical mechanics and field theories, symmetries are typically defined through the invariance of fundamental geometric structures, ensuring the preservation of the equations of motion as a natural consequence.

In particular, in multisymplectic field theories, the study of symmetries and their associated conserved quantities and conservation laws has been extensively developed within the geometric framework, where symmetries that preserve the (pre)multisymplectic structure are known as Noether or Cartan symmetries, 
This geometric formulation of Noether symmetries allows to easily prove Noether's Theorem.
In particular, see, for instance,
\cite{FS_12,GIMM_98,RWZ_20} as general references
and, specifically, 
\cite{AA_78,BBLSV_15,LMS_04,GR_23} for the Lagrangian case, and
\cite{EMR_99,GR_24} for the Hamiltonian one).

The goal of this work is to extend the results on symmetries, conservation laws, and Noether's Theorem from the classical multisymplectic field theory to the case of action-dependent field theories within the multicontact framework, both for the Lagrangian and Hamiltonian formulations.
In this context, symmetries are associated with dissipation laws.
Specifically, we aim to define the concepts of conserved and dissipated quantities in this framework, as well as the notion of symmetry for both the field equations and the multicontact structure;
and the properties of all these kinds of symmetries.
The last ones are referred to as {\sl strong Noether symmetries},
and the ultimate objective is establishing the corresponding version of Noether's Theorem, linking them to their corresponding dissipated quantities. A similar partial analysis of these subjects but in the $k$-contact framework of action-dependent field theories is found in  \cite{GGMRR_20, GGMRR_21}.

The organization of the paper is as follows:
Section \ref{prelim} reviews some preliminary concepts on multicontact geometry and the multicontact formulation of Lagrangian and Hamiltonian action-dependent field theories.
Section \ref{cdfsym} contains the main statements and results of the work.
First, we define conserved and dissipative forms and explain how they lead to conservation and dissipation laws. 
Second, we introduce different types of symmetries for multicontact Lagrangian systems; namely, {\sl generalized}, {\sl Noether}, and {\sl strong Noether symmetries}. 
After defining the canonical lifts to the phase space of the theory, the so-called {\sl natural symmetries} are distinguished, and the relationships between all these symmetries are studied, along with their properties. 
The section continues with the statement and proof of {\sl Noether's Theorem} and a discussion on how all these results extend to multicontact Hamiltonian systems. Finally, in Section \ref{applications}, an example is analyzed: a symmetry of the damped wave equation. Appendix \ref{append} is also included to describe multivector fields and their properties.  

All the manifolds are real, second-countable, connected, and of class $\Cinfty$, and the mappings are assumed to be smooth.
Sum over crossed repeated indices is understood.

%%%%%%%%%%%%%%%%%%%%%%%%%%%%%%%%%%%%%%%%%%%%%%%%%%%%%%%%
\section{Preliminaries}
\label{prelim}

This section reviews the basics on multicontact geometry and the multicontact Lagrangian and Hamiltonian formulations of classical field theories. See \cite{LGMRR_23} for more details.

%%%%%%%%%%%%%%%%%%%%%%%%%%%%%%%%%%%%%%%%%%%%%%%%%%%%%%%%
\subsection{Review on multicontact geometry}
%%%%%%%%%%%%%%%%%%%%%%%%%%%%%%%%%%%%%%%%%%%%%%%%%%%%%%%%

Consider a manifold $P$ with $\dim{P}=m+N$, where $N\geq m\geq 1$, endowed with two differential $m$-forms $\Theta,\omega\in\df^m(P)$ of constant ranks. Let $\mathcal{D}\subset\Tan P$ be a regular distribution and let $\Gamma(\mathcal{D})$ be a $\Cinfty(P)$-module of sections. For each $k\in\mathbb{N}$, define
\[
\mathcal{A}^k(\mathcal{D}):=\big\{ \alpha\in\df^k(P)\mid
\inn{Z}\alpha=0 \,,\ \forall Z\in\Gamma({\cal D})\big\} \,,
\]
namely the set of $k$-forms on $P$ that vanish when evaluated on vector fields in $\Gamma(\mathcal{D})$. At a point $\mathrm{p}\in P$, the corresponding pointwise version is
$
\mathcal{A}_{\mathrm{p}}^k(\mathcal{D}) := \big\{ \alpha\in\bigwedge^k\Tan^*_\mathrm{p}P\mid
\inn{v}\alpha=0\,,\ \forall v\in{\cal D}_\mathrm{p}\big\}$.
This definition of ${\cal A}^k({\cal D})$ can equivalently be rewritten as
\[
{\cal A}^k({\cal D})=\big\{ \alpha\in\df^k(P) \mid 
\Gamma(\cal D)\subset\ker\alpha \big\} \,,
\]
where $\ker\alpha =\{ Z\in\vf(P)\mid \inn{Z}\alpha=0\}$ is the {\it one-kernel} of a differential form $\alpha\in\df^k(P)$, with $k\geq 1$.
%Then, for a pair $(\Theta,\omega)$ we define:
\begin{definition}
\label{def:reeb_dist}
The \dfn{Reeb distribution} associated with the pair $(\Theta,\omega)$ is the distribution
${\cal D}^{\mathfrak{R}}\subset\Tan P$ defined at each point $\mathrm{p}\in P$, as
\[
{\cal D}_\mathrm{p}^{\mathfrak{R}}=\big\{ v\in\ker\omega_\mathrm{p}\ \mid\ \inn{v}\dd\Theta_\mathrm{p}\in {\cal A}^{m}_\mathrm{p}(\ker\omega)\big\} \,,
\]
with $\displaystyle{\cal D}^{\mathfrak{R}}=\coprod_{\mathrm{p}\in P}{\cal D}_\mathrm{p}^{\mathfrak{R}}$.
The set of sections of the Reeb distribution is denoted by $\mathfrak{R}:=\Gamma({\cal D}^{\mathfrak{R}})$,
and its elements $R\in\mathfrak{R}$ are called \dfn{Reeb vector fields}.
Assuming that $\ker \omega$ is a distribution of constant rank, then
\begin{equation}
\label{Reebdef}
\mathfrak{R}=\big\{ R\in\Gamma(\ker\omega)\,|\, \inn{R}\dd\Theta\in {\cal A}^{m}(\ker\omega)\big\} \,.
\end{equation}
\end{definition}

Observe that $\ker\omega\cap\ker\dd\Theta\subseteq{\cal D}^\mathfrak{R}$.
%\begin{lemma}\label{lem:reebinvolutive}
Furthermore, if $\omega\in\Omega^m(P)$ is a closed differential form with constant rank, then $\mathfrak{R}$ forms an involutive distribution.

\begin{definition}
\label{premulticontactdef}
A pair $(\Theta,\omega)$ is a \dfn{premulticontact structure} on $P$ if $\omega\in\Omega^m(P)$ is closed and, for $0\leq k\leq N-m$,
we have the following:
\begin{enumerate}[{\rm (1)}]
\item\label{prekeromega}
$\rk\ker\omega=N$.
\item\label{prerankReeb}
$\rk{\cal D}^{\mathfrak{R}}=m+k$.
\item\label{prerankcar}
$\rk\left(\ker\omega\cap\ker\Theta\cap\ker\d\Theta\right)=k$.
\item \label{preReebComp}
${\cal A}^{m-1}(\ker\omega)=\{\inn{R}\Theta\mid R\in \mathfrak{R}\}$,
\end{enumerate}
Then, the triple $(P,\Theta,\omega)$ is said to be
a \dfn{premulticontact manifold}
and $\Theta\in\Omega^m(P)$ is called a \dfn{premulticontact form} on $P$. 
The distribution
$\mathcal{C}\equiv\ker\omega\cap\ker\Theta\cap\ker\d\Theta$
is the \dfn{characteristic distribution} of $(P,\Theta,\omega)$.
If $k=0$, the pair $(\Theta,\omega)$ is a \dfn{multicontact structure}, the triple
$(P,\Theta,\omega)$ is a \dfn{multicontact manifold},
and $\Theta\in\Omega^m(P)$ is a \dfn{multicontact form}.
\end{definition}

An essential property of these structures is the following.

\begin{proposition}
\label{sigma}
Let $(P,\Theta,\omega)$ be a (pre)multicontact manifold, then
there exists a unique $\sigma_{\Theta}\in\df^1(P)$, called the \dfn{dissipation form}, satisfying
\[
\sigma_{\Theta}\wedge\inn{R}\Theta=\inn{R}\dd\Theta \,,\qquad \text{for every }\,R\in\mathfrak{R} \,.
\]
\end{proposition}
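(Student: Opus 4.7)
The plan is to construct $\sigma_\Theta$ locally from a basis adapted to $\ker\omega$ and then glue the local pieces using uniqueness. Working on a neighborhood where $\ker\omega$ has constant rank $N$, the annihilator $\mathcal{A}^1(\ker\omega)$ is a locally free $\Cinfty(P)$-module of rank $m$, so it admits a local basis $\{\theta^1,\dots,\theta^m\}$. The $(m-1)$-forms $\alpha_\mu:=(-1)^{\mu-1}\theta^1\wedge\cdots\widehat{\theta^\mu}\cdots\wedge\theta^m$ form a basis of $\mathcal{A}^{m-1}(\ker\omega)$ and satisfy $\theta^\nu\wedge\alpha_\mu=\delta^\nu_\mu\,\Omega$, where $\Omega:=\theta^1\wedge\cdots\wedge\theta^m$ generates the rank-one module $\mathcal{A}^m(\ker\omega)$. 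By condition \ref{preReebComp} of Definition \ref{premulticontactdef}, one picks Reeb vector fields $R_\mu\in\mathfrak{R}$ with $\inn{R_\mu}\Theta=\alpha_\mu$, and by \eqref{Reebdef} there exist functions $f_\mu$ with $\inn{R_\mu}\dd\Theta=f_\mu\,\Omega$. Setting $\sigma:=f_\mu\theta^\mu$ yields $\sigma\wedge\inn{R_\mu}\Theta=f_\mu\Omega=\inn{R_\mu}\dd\Theta$.

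Next, I would extend this identity to every $R\in\mathfrak{R}$ by a rank argument. The $\Cinfty(P)$-linear map $\Phi\colon\mathfrak{R}\to\mathcal{A}^{m-1}(\ker\omega)$, $\Phi(R):=\inn{R}\Theta$, is surjective by condition \ref{preReebComp}. Its domain has rank $m+k$ (by \ref{prerankReeb}), its codomain rank $m$, and $\Gamma(\mathcal{C})\subset\ker\Phi$ has rank $k$ (by \ref{prerankcar}); hence $\ker\Phi=\Gamma(\mathcal{C})$. Any $R\in\mathfrak{R}$ therefore admits a local decomposition $R=g^\mu R_\mu+R_0$ with $R_0\in\Gamma(\mathcal{C})\subset\ker\Theta\cap\ker\dd\Theta$, and $\sigma\wedge\inn{R}\Theta=\inn{R}\dd\Theta$ follows from the already-established case of the $R_\mu$ by $\Cinfty(P)$-linearity.

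For uniqueness I would show that any $\eta\in\Omega^1(P)$ satisfying $\eta\wedge\alpha=0$ for every $\alpha\in\mathcal{A}^{m-1}(\ker\omega)$ must vanish. At a point, complete the basis $\{\theta^\mu\}$ of $\mathcal{A}^1(\ker\omega)$ to a basis of $\Tan^*P$ by 1-forms $\{\vartheta^b\}_{b=1}^N$ dual to a basis of $\ker\omega$, and write $\eta=a_\mu\theta^\mu+c_b\vartheta^b$. Testing against $\alpha_\mu$ gives $a_\mu\Omega+c_b\,\vartheta^b\wedge\alpha_\mu=0$; since $\Omega$ lies in $\mathcal{A}^m(\ker\omega)$ while the $m$-forms $\vartheta^b\wedge\alpha_\mu$ are linearly independent elements of $\bigwedge^m\Tan^*P$ lying outside $\mathcal{A}^m(\ker\omega)$, each coefficient must vanish separately. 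This uniqueness also forces the locally defined $\sigma$'s to agree on overlapping trivializations, so they patch into a well-defined global form $\sigma_\Theta\in\Omega^1(P)$.

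The main obstacle I foresee is the uniqueness step: one has to pick the right complement of $\mathcal{A}^1(\ker\omega)$ inside $\Tan^*P$ and verify that the mixed top forms $\vartheta^b\wedge\alpha_\mu$ remain linearly independent and stay outside $\mathcal{A}^m(\ker\omega)$. Once this multilinear algebra is in place, the rest is a straightforward application of conditions \eqref{Reebdef} and \ref{preReebComp} combined with the rank count from Definition \ref{premulticontactdef}.
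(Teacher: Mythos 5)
Your proof is correct and complete. Note that the paper itself does not prove Proposition \ref{sigma}: it is recalled in the preliminaries from the reference \cite{LGMRR_23}, so there is no in-paper argument to compare against. Your construction — local basis $\{\theta^\mu\}$ of $\mathcal{A}^1(\ker\omega)$, the induced basis $\{\alpha_\mu\}$ of $\mathcal{A}^{m-1}(\ker\omega)$ with $\theta^\nu\wedge\alpha_\mu=\delta^\nu_\mu\,\Omega$, Reeb fields $R_\mu$ supplied by condition \ref{preReebComp}, coefficients $f_\mu$ supplied by \eqref{Reebdef}, and $\sigma=f_\mu\theta^\mu$ — is the natural one, and your rank count $\ker\Phi=\Gamma(\mathcal{C})$ (using \ref{prerankReeb} and \ref{prerankcar}, plus the paper's observation that $\ker\omega\cap\ker\d\Theta\subseteq\mathcal{D}^{\mathfrak{R}}$ so that $\Gamma(\mathcal{C})\subseteq\mathfrak{R}$) correctly extends the identity from the $R_\mu$ to all of $\mathfrak{R}$. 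The uniqueness argument is also sound: in the basis $\{\theta^\mu,\vartheta^b\}$ of $\Tan^*_{\mathrm{p}}P$, the forms $\Omega$ and $\vartheta^b\wedge\alpha_\mu$ are distinct induced basis elements of $\bigwedge^m\Tan^*_{\mathrm{p}}P$, so all coefficients of $\eta$ vanish.

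Three small points deserve an explicit line each. First, the claim that the $\alpha_\mu$ span $\mathcal{A}^{m-1}(\ker\omega)$ should be justified pointwise: any form annihilated by every vector of $\ker\omega$ can involve only the covectors of $\mathcal{A}^1_{\mathrm{p}}(\ker\omega)$, hence is a combination of $(m-1)$-fold wedges of the $\theta^\mu$. Second, condition \ref{preReebComp} is stated globally, so producing $R_\mu$ with $\iota_{R_\mu}\Theta=\alpha_\mu$ on a chart requires the usual bump-function localization. Third, the rank--nullity step for the module map $\Phi$ should be run pointwise (global surjectivity of $\Phi$ gives pointwise surjectivity by evaluation, whence $\dim\ker\Phi_{\mathrm{p}}=k=\dim\mathcal{C}_{\mathrm{p}}$). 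None of these affects the validity of the argument.
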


Then, using the dissipation form one can define the following.

\begin{definition}
\label{def:bard}
Let $\sigma_{\Theta}\in\df^1(P)$ be the dissipation form. Then, we define the operator associated with $\sigma_{\Theta}$ as follows
\begin{align*}
\bd:\df^k(P)&\longrightarrow\df^{k+1}(P)
\\
\beta&\longmapsto\bd\beta=\dd \beta+\sigma_{\Theta}\wedge\beta\,.
\end{align*}
\end{definition}

In field theories, the (pre)multisymplectic structures satisfy the following additional condition, which guarantees that the theory is variational (see \cite{GLMR_24}).

\begin{definition}
Let \((P,\Theta,\omega)\) be a (pre)multicontact manifold satisfying
\begin{equation}
\iota_X\iota_{Y}\Theta = 0 \,, \quad
\text{for every } X,Y\in\Gamma(\ker\omega) \,.
\end{equation}
Then $(P,\Theta,\omega)$ is a \dfn{variational (pre)multicontact manifold} and $(\Theta,\omega)$ is said to be a \dfn{variational (pre)multicontact structure}.
\end{definition}

%%%%%%%%%%%%%%%%%%%%%%%%%%%%%%%%%%%%%%%%%%%%%%%%%%%%%%%%
\subsection{Multicontact Lagrangian systems}
\label{sect:mcs}
%%%%%%%%%%%%%%%%%%%%%%%%%%%%%%%%%%%%%%%%%%%%%%%%%%%%%%%%

Let $\pi\colon E\to M$ be the {\sl configuration bundle} of a classical first-order field theory, 
where $M$ is an orientable $m$-dimensional manifold with volume form $\eta\in\df^m(M)$, and
let $J^1\pi\to E\to M$ be the corresponding first-order jet bundle.
If $\dim M=m$ and $\dim E=n+m$, then $\dim J^1\pi=nm+n+m$.
Natural coordinates in $J^1\pi$ adapted to the bundle structure
are $(x^\mu,y^a,y^a_\mu)$ ($\mu = 1,\ldots,m$; $a=1,\ldots,n$),
and are such that
$\eta=\d x^1\wedge\cdots\wedge\d x^m=:\d^mx$.

The multicontact Lagrangian formalism for action-dependent (or non-conservative) field theories takes place in the bundle, 
${\cal P}=J^1\pi\times_M\bigwedge^{m-1}\Tan^*M$,
where, $\bigwedge^{m-1}\Tan^*M$ denotes the bundle of $(m-1)$-forms on $M$.
Now the {\sl configuration bundle} is $\mathscr{E}\equiv E\times_M\bigwedge^{m-1}\Tan^*M\to M$.
Natural coordinates in ${\cal P}$ are $(x^\mu,y^a,y^a_\mu,s^\mu)$,
and thus $\dim\mathcal{P}=2m+n+nm$. Note that in these coordinates $\omega:=\tau^*\eta=\d^m x\in\Omega^m(\mathcal{P})$, where $\tau\colon \mathcal{P}\rightarrow M$ is the natural projection. We have the natural projections depicted in the following diagram:
\begin{equation}
\label{firstdiagram}
\xymatrix{
&\ &  \  &{\cal P}=J^1\pi\times_M\bigwedge^{m-1}\Tan^*M \ar[rrd]_{\tau_2}\ar[lld]^{\tau_1}\ar[ddd]_{\tau}\ &  \   &
\\
&J^1\pi \ar[ddrr]^{\bar{\pi}^1}\ar[d]^{\pi^1}\ & \ & \ & \ &\bigwedge^{m-1}\Tan^*M
 \ar[ddll]^{\kappa}
\\
&E\ar[drr]_{\pi}\ & \ & \ & \ &  
 \\
&\ & \ &M \ & \ & 
}
\end{equation}

A section $\bm{\psi}\colon M\rightarrow {\cal P}$ of $\tau$ is said to be a \dfn{holonomic section} in ${\cal P}$ if
the section $\psi:=\tau_1\circ\bm{\psi}\colon M\to J^1\pi$
is holonomic in $J^1\pi$; that is,
there is a section $\phi\colon M\to E$ of $\pi$ such that $\psi=j^1\phi$.
It is customary to write $\bm{\psi}=(\psi,s)=(j^1\phi,s)$, 
where $s\colon M\to\bigwedge^{m-1}\Tan^*M$  is a section of the projection $\kappa\colon 
\bigwedge^{m-1}\Tan^*M\to M$. We also say that $\bm{\psi}$ is the
\dfn{canonical prolongation} of the section 
$\bm\phi:=(\phi,s)\colon M\to\mathscr{E}\equiv E\times_M\bigwedge^{m-1}\Tan^*M$ to ${\cal P}$.

\begin{definition}
A multivector field
$\bfX\in\vf^m({\cal P})$ is a \dfn{holonomic $m$-multivector field}
or a \dfn{second-order partial differential equation} (\textsc{sopde}) in ${\cal P}$ if
it is $\tau$-transverse (see Appendix~\ref{append}), integrable, and its integral sections are holonomic on ${\cal P}$.
\end{definition}

The $\tau$-transversality condition can be taken to be $\inn{\bfX}\omega=1$, since if $\bfX$ satisfies $\inn{\bfX}\omega \neq 0$, there always exists a nonvanishing function $f$ such that $\bfX' = f\bfX$ satisfies $\inn{\bfX'}\omega=1$. Then,
the local expression of a locally decomposable {\sc sopde} in ${\cal P}$ reads
\[
%\label{localsode2}
\bfX = \bigwedge^m_{\mu=1}
\Big(\parder{}{x^\mu}+y^i_\mu\frac{\displaystyle\partial} {\displaystyle
\partial y^i}+F_{\mu\nu}^i\frac{\displaystyle\partial}{\displaystyle \partial y^i_\nu}+g^\nu_\mu\,\frac{\partial}{\partial s^\nu}\Big)\,,
\]
and its integral sections are solutions to the system of second-order partial differential equations:
\begin{equation}
    y^i_\mu=\parder{y^i}{x^\mu}\,,\qquad F^i_{\mu\nu}=\frac{\partial^2y^i}{\partial x^\mu \partial x^\nu}\,.
\end{equation}
Note that the functions $F_{\mu\nu}^i$ and $g_\mu^\nu$ are not completely arbitrary, since they need to satisfy some integrability condition (see Appendix \ref{append}).
Multivector fields in \({\cal P}\) that possess the above local expression given above but are not necessarily integrable are commonly referred to as {\sl semi-holonomic}.

Physical information in field theories is introduced through the so-called {\sl Lagrangian densities}.
A \dfn{Lagrangian density} is a differential form $\mathcal{L}\in\df^m({\cal P})$;
hence $\mathcal{L}=L\,\d^mx$, where $L\in\Cinfty({\cal P})$ is the
\dfn{Lagrangian function}.
Then, the \dfn{Lagrangian $m$-form} associated with $\mathcal{L}$
is defined using the canonical geometric elements which ${\cal P}$ is endowed with, and its coordinate expression reads
\begin{equation}
\label{thetacoor1}
    \Theta_{\mathcal{L}}=
    -\frac{\partial L}{\partial y^a_\mu}\d y^a\wedge\d^{m-1}x_\mu +\left(\frac{\partial L}{\partial y^a_\mu}y^a_\mu-L\right)\d^m x+\d s^\mu\wedge \d^{m-1}x_\mu\in\df^m({\cal P}) \,,
\end{equation}
where $\d^{m-1}x_\mu = \inn{\tparder{}{x^\mu}}\d^m x$. The local function
$\displaystyle E_\mathcal{L}=\frac{\partial L}{\partial y^a_\mu}y^a_\mu-L$
is called the \dfn{energy Lagrangian function} associated with $L\in\Cinfty(\mathcal{P})$.
A Lagrangian function $L\in\Cinfty({\cal P})$ is \dfn{regular} if the Hessian matrix
$\displaystyle 
\left(\frac{\partial^2L}{\partial y^a_\mu\partial y^b_\nu}\right)$
is regular everywhere; then $\Theta_\mathcal{L}$ is a \dfn{Lagrangian multicontact form} on ${\cal P}$
and the triple $({\cal P},\Theta_\mathcal{L},\omega)$ is said to be a \dfn{multicontact Lagrangian system}.
Otherwise, $L$ is a \dfn{singular} Lagrangian and, under certain additional conditions, $\Theta_\mathcal{L}$ is a {\sl premulticontact form} on ${\cal P}$
and $({\cal P},\Theta_\mathcal{L},\omega)$ is a \dfn{premulticontact Lagrangian system}.

The next step is to introduce the \dfn{dissipation form} which, in the Lagrangian formalism, has the coordinate expression of the form
\begin{equation}
\label{sigmaL}
\sigma_{\Theta_\mathcal{L}}=-\parder{L}{s^\mu}\,\d x^\mu \,.
\end{equation}
Then, finally, we construct the form,
$$
\overline\d\Theta_\mathcal{L} =\d\Theta_\mathcal{L}+\sigma_{\Theta_\mathcal{L}}\wedge\Theta_\mathcal{L}
 = \d\Theta_\mathcal{L}-\parder{L}{s^\nu}\d x^\nu\wedge\Theta_\mathcal{L}\,.
 $$

For a multicontact Lagrangian system $({\cal P},\Theta_\mathcal{L},\omega)$
the Lagrangian field equations are derived from the {\sl generalized Herglotz Variational Principle} \cite{GLMR_24},
and can be stated alternatively as:

\begin{definition}
\label{mconteqH}
\begin{enumerate}[{\rm(1)}]
\item  
The \dfn{multicontact Lagrangian equations for holonomic sections}
$\bm{\psi}\colon M\to{\cal P}$ are
\begin{equation}
\label{sect1L}
\bm{\psi}^*\Theta_{\cal L}= 0  \,,\qquad
\bm{\psi}^*\inn{X}\bd\Theta_{\cal L}= 0 \,, \qquad \text{for every }\ X\in\vf({\cal P}) \,.
\end{equation}
\item 
The \dfn{multicontact Lagrangian equations for 
holonomic multivector fields} $\bfX \in\vf^m({\cal P})$ are
\begin{equation}
\label{vfH}
\inn{\mathbf{X}}\Theta_{\cal L}=0 \,,\qquad \inn{\bfX}\bd\Theta_{\cal L}=0 \,.
\end{equation}
Recall that holonomic multivector fields are $\tau$-transverse.
Then, note that equations \eqref{vfH} and the $\tau$-transversality condition, $\inn{\bfX}\omega\not=0$, hold for every multivector field of the equivalence class $\{\bfX\}$ (that is, for every $\bfX'=f\bfX$, with $f$ nonvanishing; see the Appendix \ref{append}).
Then, the condition of $\tau$-transversality can be imposed by simply asking that $\inn{\bfX}\omega=1$.
\end{enumerate}
\end{definition}

In coordinates, for holonomic sections
$\displaystyle\bm{\psi}(x^\nu)=\Big(x^\mu,y^a(x^\nu),\parder{y^a}{x^\mu}(x^\nu),s^\mu(x^\nu)\Big)$,
equations \eqref{sect1L} read
\begin{align}
 \parder{s^\mu}{x^\mu}&=L\circ{\bm{\psi}} \,,
 \label{ELeqs2}
 \\
\label{ELeqs1}
\frac{\partial}{\partial x^\mu}
\left(\frac{\displaystyle\partial L}{\partial
y^a_\mu}\circ{\bm{\psi}}\right)&=
\left(\frac{\partial L}{\partial y^a}+
\displaystyle\frac{\partial L}{\partial s^\mu}\displaystyle\frac{\partial L}{\partial y^a_\mu}\right)\circ{\bm{\psi}} \,,
\end{align}
which are precisely the equations for the integral sections of {\sc sopde} multivector fields solutions to the equations \eqref{vfH}.
Equations \eqref{ELeqs1}
are called the {\sl Herglotz--Euler--Lagrange field equations}.

\begin{theorem}
\label{solutions}
Let $({\cal P},\Theta_\mathcal{L},\omega)$ be a (pre)multicontact Lagrangian system.
If $\bfX\in\mathfrak{X}^m({\cal P})$ is a holonomic multivector field solution to \eqref{vfH} (that is, a {\sc sopde}), its integral sections are the solutions to the multicontact Euler--Lagrange field equations for holonomic sections \eqref{sect1L}.

In addition, if the Lagrangian system is regular then:
\begin{enumerate}[\rm (1)]
\item
The multicontact Lagrangian field equations
for multivector fields \eqref{vfH} 
have solution on ${\cal P}$,
which is not unique if $m>1$.
\item
Multivector fields $\bfX\in \mathfrak{X}^m({\cal P})$ which are solutions to equations \eqref{vfH} are semi-holonomic
(and hence holonomic, if they are integrable).
\end{enumerate}
\end{theorem}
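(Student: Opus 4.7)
\emph{Integral sections of a holonomic solution.} Since $\bfX$ is holonomic (hence locally decomposable as $\bfX = \bigwedge_{\mu=1}^m X_\mu$) and $\tau$-transverse with $\inn{\bfX}\omega=1$, any integral section $\bm\psi$ satisfies $\bm\psi_*(\partial/\partial x^\mu) = X_\mu\circ\bm\psi$. For an $m$-form $\alpha$ and an $(m+1)$-form $\beta$ on ${\cal P}$ and any $Y\in\vf({\cal P})$, one then has the standard identities
\[
\bm\psi^*\alpha = \bigl((\inn{\bfX}\alpha)\circ\bm\psi\bigr)\d^m x\,, \qquad \bm\psi^*(\inn{Y}\beta) = \bigl((\inn{\bfX}\beta)(Y)\circ\bm\psi\bigr)\d^m x\,,
\]
and applying them to $\alpha=\Theta_{\cal L}$ and $\beta=\bd\Theta_{\cal L}$ together with \eqref{vfH} yields \eqref{sect1L} at once.

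\emph{Existence and non-uniqueness in the regular case.} I would parameterize a general $\tau$-transverse, locally decomposable {\sc sopde} as in Section~\ref{sect:mcs} with unknown coefficients $F^a_{\mu\nu}$ and $g^\nu_\mu$, substitute into \eqref{thetacoor1} and into $\bd\Theta_{\cal L} = \d\Theta_{\cal L} + \sigma_{\Theta_{\cal L}}\wedge\Theta_{\cal L}$ (with $\sigma_{\Theta_{\cal L}}$ given by \eqref{sigmaL}), and read off the resulting algebraic system. The equation $\inn{\bfX}\Theta_{\cal L} = 0$ fixes the trace-like constraint $g^\mu_\mu = L$, while $\inn{\bfX}\bd\Theta_{\cal L} = 0$, after projecting onto an independent local basis of $1$-forms, becomes a linear system for the $F^a_{\mu\nu}$ whose governing block is the Hessian $\partial^2 L/\partial y^a_\mu\partial y^b_\nu$; regularity of $L$ then gives local solvability. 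For $m>1$ only the symmetric part $F^a_{(\mu\nu)}$ is determined, so the antisymmetric part remains free, producing the claimed family of non-unique solutions.

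\emph{Semi-holonomicity.} For this last claim I would drop the {\sc sopde} ansatz and write each $X_\mu$ as a general $\tau$-transverse vector field with unknown coefficient $A^a_\mu$ in front of $\partial/\partial y^a$. The components of $\inn{\bfX}\bd\Theta_{\cal L} = 0$ along the monomials involving $\d y^a_\mu$ yield linear relations whose kernel, again by invertibility of the Hessian, forces $A^a_\mu = y^a_\mu$; integrability then upgrades semi-holonomicity to holonomy. The main obstacle will be the bookkeeping in these last two steps: the correction $\sigma_{\Theta_{\cal L}}\wedge\Theta_{\cal L}$ in $\bd\Theta_{\cal L}$ introduces cross-couplings mixing the jet directions $\partial/\partial y^a_\mu$ with the action directions $\partial/\partial s^\mu$ that are absent in the classical multisymplectic setting, and one must carefully isolate the Hessian block on which the regularity hypothesis acts before the linear algebra closes the argument.
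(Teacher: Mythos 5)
The paper states Theorem \ref{solutions} without proof---it is recalled background from the multicontact literature (see the reference given at the start of Section \ref{prelim})---so there is no in-paper argument to compare against; your sketch follows the standard coordinate computation and is essentially sound, but two details should be adjusted. In the existence/non-uniqueness step, the assertion that ``only the symmetric part $F^a_{(\mu\nu)}$ is determined'' overstates what the equations fix: the $\d y^a$-components of $\inn{\bfX}\bd\Theta_{\cal L}=0$ involve the unknowns $F^b_{\nu\mu}$ only through the fully contracted combination $\parderr{L}{y^a_\mu}{y^b_\nu}F^b_{\nu\mu}$, i.e.\ one scalar condition for each $a$ (matching \eqref{ELeqs1}), so for $m>1$ even the symmetric part retains freedom; the non-uniqueness conclusion therefore holds a fortiori, but your description of the free data is incorrect. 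In the semi-holonomy step, the ``main obstacle'' you anticipate does not actually arise: by \eqref{thetacoor1} the form $\Theta_{\cal L}$ contains no $\d y^a_\mu$ factor, hence neither does $\sigma_{\Theta_{\cal L}}\wedge\Theta_{\cal L}=-\parder{L}{s^\nu}\,\d x^\nu\wedge\Theta_{\cal L}$, so the coefficients of $\d y^a_\mu$ in $\inn{\bfX}\bd\Theta_{\cal L}$ come exclusively from $\d\Theta_{\cal L}$ and the linear system forcing $A^a_\mu=y^a_\mu$ via the invertible Hessian is verbatim the multisymplectic one; the action variables enter only the remaining components, which reduce to the constraint $g^\mu_\mu=L$, the Herglotz correction $\parder{L}{s^\mu}\parder{L}{y^a_\mu}$ in \eqref{ELeqs1}, and identities. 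The first part of your argument (integral sections of a \textsc{sopde} solution satisfy \eqref{sect1L}) is correct as written, the normalization $\inn{\bfX}\omega=1$ being what identifies $\bm{\psi}_*(\partial/\partial x^\mu)$ with $X_\mu\circ\bm{\psi}$ up to sign conventions.
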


\begin{remark}
If the Lagrangian $L$ is not regular and $({\cal P},\Theta_\mathcal{L},\omega)$
is a premulticontact system; in general, 
the field equations \eqref{vfH} have no solutions everywhere on ${\cal P}$
and, when they do and are integrable, they are not necessarily {\sc sopde}s.
Hence, the requirement to be {\sc sopde} is imposed as an additional condition.
In the best situations, {\sc sopde} solutions exist only on a submanifold ${\cal P}_f\hookrightarrow{\cal P}$ which is obtained by applying a suitable constraint algorithm.    
\end{remark}

In this paper, we only consider the case of regular Lagrangians;
that is, multicontact Lagrangian systems.
In the singular (premulticontact) case, the diffeomorphisms considered must leave the submanifold ${\cal P}_f$ invariant and all the vector fields and multivector fields must be tangent to ${\cal P}_f$; then, all results must be restricted to ${\cal P}_f$.
Nevertheless, as (semi-)holonomy is to be imposed as an additional condition, for singular Lagrangian systems,
this problem requires further research.

%%%%%%%%%%%%%%%%%%%%%%%%%%%%%%%%%%%%%%%%%%%%%%%%%%%%%%%%%%%%%%%%
\subsection{Multicontact Hamiltonian systems (regular case)}
\label{mhf}
%%%%%%%%%%%%%%%%%%%%%%%%%%%%%%%%%%%%%%%%%%%%%%%%%%%%%%%%

Let ${\cal M}\pi\equiv\bigwedge_2^m\Tan^*E$ be the bundle of $m$-forms on
$E$ vanishing by contraction with two $\pi$-vertical vector fields. It is endowed with natural coordinates $(x^\nu,y^a,p^\nu_a,p)$
adapted to the bundle structure ${\cal M}\pi\to E\to M$, and such that
$\eta=\d^mx$; so $\dim\, {\cal M}\pi=nm+n+m+1$.
Then, consider
$J^{1*}\pi\equiv{\cal M}\pi/\bigwedge^m_1\Tan^*E$
(where $\bigwedge^m_1\Tan^*E$ is the bundle of $\pi$-semibasic $m$-forms on
$E$);
whose natural coordinates are $(x^\mu,y^a,p_a^\mu)$, and so 
$\dim\,J^{1*}\pi=nm+n+m$.

For the Hamiltonian formulation of action-dependent first-order classical field theories, consider the bundles 
\[
\widetilde{\cal P}={\cal M}\pi\times_M\bigwedge\nolimits^{m-1}\Tan^* M
\,, \qquad
{\cal P}^* =J^{1*}\pi\times_M\bigwedge\nolimits^{m-1}\Tan^*M \,,
\]
which have natural coordinates $(x^\mu, y^i,p_i^\mu,p,s^\mu)$ and $(x^\mu,y^a,p_a^\mu,s^\mu)$, respectively.
We have the natural projections depicted in the following diagram:
$$
\xymatrix{
& &\widetilde{\cal P}={\cal M}\pi\times_M\bigwedge^{m-1}\Tan^* M
\ar[rddd]^{\widetilde\tau_2} \ar[llddd]_{\widetilde{\chi}} \ar[dd]_{\widetilde{\mathfrak{p}}} 
 \ar[lldd]_{\widetilde{\tau}_1} 
\\ & &
\\
{\cal M}\pi\ 
\ar[d]_{\mathfrak{p}}& &{\cal P}^*=J^{1*}\pi\times_M\bigwedge^{m-1}\Tan^* M
\ar@/_1pc/[uu]_{{\bf h}}
\ar[rd]_{\overline\tau_2}\ar[lld]^{\overline{\tau}_1}\ar[ddd]_{\overline\tau}
\\
J^{1*}\pi
\ar[ddrr]^{\overline{\pi}^1_M}\ar[d]_{\overline{\pi}^1}& & &\ar[ddl]_{\kappa}\bigwedge^{m-1}\Tan^* M
\\
E\ar[drr]_{\pi}
 \\
& \ &M \ & \ & 
}
$$
Since the bundles ${\cal M}\pi$ and  $\bigwedge^{m-1}(\Tan^* M)$ are bundles of forms, they have canonical structures, their ``tautological forms''
$\widetilde\Theta\in\df^m({\cal M}\pi)$ 
(called the \dfn{Liouville form} of ${\cal M}\pi$) and
$\theta\in \df^{m-1}(\bigwedge^{m-1}(\Tan^* M))$,
whose local expressions are
$$
\widetilde\Theta=p_a^\mu\d y^a\wedge\d^{m-1}x_\mu+p\,\d^m x
\,,\qquad
\theta=s^\mu\,\d^{m-1}x_\mu \,.
$$
Now, let ${\bf h}\colon {\cal P}^*\to\widetilde{\cal P}$ be a section of $\widetilde{\mathfrak{p}}$.
It is locally determined by a function $H\in\Cinfty(U)$, $U\subset{\cal P}^*$,
such that ${\bf h}(x^\mu,y^a,p^\mu_a)=(x^\mu,y^a,p^\mu_a,p=-H(x^\nu,y^b,p^\nu_b))$.
The elements ${\bf h}$ and $H$ are called
a \dfn{Hamiltonian section} and its associated \dfn{Hamiltonian function}, respectively.
Then, we define,
\begin{align}
\label{thetaHcoor}
\Theta_{\cal H} &= -(\widetilde{\tau}_1\circ{\bf h})^*\widetilde\Theta+\d(\overline\tau_2^*\theta)\nonumber\\
&=
-p_i^\mu\d y^i\wedge\d^{m-1}x_\mu+H\,\d^m x+\d s^\mu\wedge \d^{m-1}x_\mu\in\df^m({\cal P}^*) \,. 
\end{align}
The form 
$\Theta_{\cal H}$ is a variational multicontact form and a triple $({\cal P}^*,\Theta_{\cal H},\omega)$
is said to be a \dfn{multicontact Hamiltonian system}.
In this case, the {\sl dissipation form} is expressed as
\begin{equation}
\sigma_{\cal H}=\parder{H}{s^\mu}\,\d x^\mu\,.
\end{equation}

For multicontact Hamiltonian systems, the field equations
can be stated alternatively as:
\begin{enumerate}[{\rm(1)}]
\item  
The \dfn{multicontact Hamilton--de Donder--Weyl equations for sections}
$\bm{\psi}\colon M\to{\cal P}^*$ are
\begin{equation}
\label{sect1H}
\bm{\psi}^*\Theta_{\cal H}= 0  \,, \qquad
\bm{\psi}^*\inn{Y}\bd \Theta_{\cal H}= 0 \,, \qquad \text{for every }\ Y\in\vf({\cal }P^*) \, .
\end{equation}
\item 
The \dfn{multicontact Hamilton--de Donder--Weyl equations for $\overline\tau$-transverse and locally decomposable multivector fields} ${\bf X}_{\cal H}\in\vf^m({\cal P}^*)$ are
\begin{equation}
\label{vfH2}
\inn{\bfX_{\cal H}}\Theta_{\cal H}=0 \,, \qquad \inn{\bfX_{\cal H}}\bd\Theta_{\cal H}=0 \,.
\end{equation}
Equations \eqref{vfH2} and the $\overline\tau$-transversality condition hold for every multivector field of the equivalence class $\{\bfX\}$,
and the transversality condition can be imposed by asking $\inn{\bfX}\omega=1$.
\end{enumerate}
In natural coordinates, for a $\overline\tau$-transverse, locally decomposable multivector field ${\bf X}_{\cal H}\in~\vf^m({\cal P}^*)$, we have
\begin{equation}
\label{Hammv}
{\bf X}_{\cal H}=
\bigwedge_{\mu=0}^{m-1}\Big(\parder{}{x^\mu}+ (X_{\cal H})^a_\mu\frac{\partial}{\partial y^a}+
(X_{\cal H})_{\mu a}^\nu\frac{\partial}{\partial p_a^\nu}+(X_{\cal H})_\mu^\nu\parder{}{s^\nu}\Big) \,,
\end{equation} 
if it is a solution to equations \eqref{vfH2},
taking into account the local expression \eqref{thetaHcoor}
these field equations lead to
\begin{equation}
\label{coor2}
(X_{\cal H})^a_\mu=\frac{\partial H}{\partial p^\mu_a} \,,\qquad
(X_{\cal H})_{\mu a}^\mu= 
-\left(\frac{\partial H}{\partial y^a}+ p_a^\mu\,\frac{\partial H}{\partial s^\mu}\right)\,,\qquad (X_{\cal H})_\mu^\mu = 
p_a^\mu\,\frac{\partial H}{\partial p^\mu_a}-H \,,   
\end{equation}
along with a last group of equations which are identities when the above ones are taken into account.
Then, the integral sections $\bm{\psi}(x^\nu)=(x^\mu,y^a(x^\nu),p^\mu_a(x^\nu),s^\mu(x^\nu))$
of all the integrable solutions ${\bf X}_{\cal H}$, 
are solution to equations \eqref{sect1H} which read
\begin{equation}
\frac{\partial y^a}{\partial x^\mu}= \frac{\partial H}{\partial p^\mu_a}\circ\bm{\psi} \,,\qquad
\frac{\partial p^\mu_a}{\partial x^\mu} = 
-\left(\frac{\partial H}{\partial y^a}+ p_a^\mu\,\frac{\partial H}{\partial s^\mu}\right)\circ\bm{\psi}\,,\qquad 
\frac{\partial s^\mu}{\partial x^\mu} = \left(p_a^\mu\,\frac{\partial H}{\partial p^\mu_a}-H\right)\circ\bm{\psi}\,.
\label{coor1}
\end{equation}
These equation are called the \dfn{Herglotz--Hamilton--de Donder--Weyl equations}
for action-dependent classical field theories and are compatible in ${\cal P}^*$.

%%%%%%%%%%%%%%%%%%%%%%%%%%%%%%%%%%%%%%%%%%%%%%%%%%%%%%%%
\section{Conservation and dissipation forms and symmetries}
\label{cdfsym}
%%%%%%%%%%%%%%%%%%%%%%%%%%%%%%%%%%%%%%%%%%%%%%%%%%%%%%%%

This section introduces the dissipative quantities and proves the analogue of Noether's theorem for Lagrangian and Hamiltonian field theories. The definitions and results presented within this section are natural generalizations from the multisymplectic to the multicontact setting \cite{EMR_99, GIMM_98, GR_23, GR_24}.

%%%%%%%%%%%%%%%%%%%%%%%%%%%%%%%%%%%%%%%%%%%%%%%%%%%%%%%%
\subsection{Conservation and dissipation laws}
%%%%%%%%%%%%%%%%%%%%%%%%%%%%%%%%%%%%%%%%%%%%%%%%%%%%%%%%

Consider a multicontact Lagrangian system \((\mathcal{P},\Theta_{\mathcal{L}},\omega)\). Let
\[
\mathfrak{X}^m(\Theta_{\mathcal{L}}):=\{\mathbf{X}\in \mathfrak{X}^m(\mathcal{P})\,\mid\,\iota_{\mathbf{X}}\Theta_{\mathcal{L}}=0,\quad \iota_{\mathbf{X}}\bd\Theta_{\mathcal{L}}=0\}\,.
\]
Additionally, let
\[
\mathfrak{X}^m(\Theta_{\mathcal{L}},\omega):=\{\mathbf{X}\in \mathfrak{X}^m(\Theta_{\mathcal{L}})\,\mid\,\iota_{\mathbf{X}}\omega\neq 0 \}\,.
\]
Without loss of generality, one can assume that, in the above definition, \(\iota_{\mathbf{X}}\omega=1\), by the arguments presented in Section \ref{prelim}. Within this work we are mainly focused on locally decomposable multivector fields, hence the set of all locally decomposable multivector fields belonging to \(\mathfrak{X}^m(\Theta_{\mathcal{L}},\omega)\) is denoted by \(\mathfrak{X}^m_d(\Theta_{\mathcal{L}},\omega)\);
so it is made of the solutions to the equations \eqref{vfH}. 
Moreover, let \(\mathfrak{X}^m_{{\rm I},d}(\Theta_{\mathcal{L}},\omega)\) be the subset of all the multivector fields in \(\mathfrak{X}^m_d(\Theta_{\mathcal{L}},\omega)\) that are integrable. Then, we have
\[
\mathfrak{X}^m_{{\rm I},d}(\Theta_{\mathcal{L}},\omega)\subseteq\mathfrak{X}^m_d(\Theta_{\mathcal{L}},\omega)\subseteq\mathfrak{X}^m(\Theta_{\mathcal{L}},\omega)\subseteq \mathfrak{X}^m(\Theta_{\mathcal{L}})\,.
\]

\begin{definition}
Let $(\mathcal{P},\Theta_{\cal L},\omega)$ be a multicontact Lagrangian system.
\begin{enumerate}
\item A \dfn{conserved form} of the system is a differential form $\bm\xi\in\df^{m-1}({\cal P})$ such that, for every $\bfX \in\vf_d^m(\Theta_\mathcal{L},\omega)$,
\[
\Lie_\bfX \bm\xi=(-1)^{m+1}\iota_\bfX \d\bm\xi=0\,.
\]
\item A \dfn{dissipative form} of the system 
is a differential form $\bm\xi\in\df^{m-1}({\cal P})$ which satisfies that,
for every $\bfX \in\vf_d^m(\Theta_\mathcal{L},\omega)$,
\[
    \inn{\mathbf{X}}\bd\bm\xi=0\,.
\]
\end{enumerate}
\end{definition}

Note that if $\boldsymbol{\xi},\boldsymbol{\zeta}$ are conserved or dissipative forms, then $\boldsymbol{\xi}+\boldsymbol{\zeta}$ is also a conserved or dissipative form, respectively. Note that $\bm \xi\in\Omega^{m-1}(\cal{P})$ can be conserved and dissipative at the same time.

The first property of conserved forms is as follows:

\begin{proposition}
\label{Prop::ConQS}
Let $\bm \xi\in\df^{m-1}({\cal P})$ be a conserved form of \((\mathcal{P},\Theta_{\mathcal{L}},\omega)\) and let $\bfX \in\vf^m_{{\rm I},d}(\Theta_{\mathcal{L}},\omega)$. 
Then, $\bm\xi$ is closed on the integral submanifolds of $\bfX $. In other words, if $S\subset\cal P$ is an integral submanifold of $\bfX $ and $j_S\colon S\hookrightarrow{\cal P}$ is the natural embedding, then $j_S^*(\d\bm\xi)=0$.
\end{proposition}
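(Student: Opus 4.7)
The plan is to exploit the locally decomposable, integrable structure of $\bfX \in \vf^m_{{\rm I},d}(\Theta_\mathcal{L},\omega)$ together with the conservation condition $\iota_\bfX \d\bm\xi = 0$ (which is equivalent to $\Lie_\bfX \bm\xi = 0$ up to sign, by the very definition of a conserved form). The key observation is that $j_S^* \d\bm\xi$ is an $m$-form on the $m$-dimensional integral submanifold $S$, and any such top-degree form is determined pointwise by its evaluation on a single basis of $T_pS$.

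First, I would fix an arbitrary point $p \in S$ and pass to a local decomposition $\bfX = X_1 \wedge \cdots \wedge X_m$ valid on a neighborhood of $p$ in $\mathcal{P}$. Since $\bfX \in \vf^m_d(\Theta_\mathcal{L},\omega)$, one may normalize so that $\iota_\bfX \omega = 1$, which in particular forces the vector fields $X_1, \ldots, X_m$ to be pointwise linearly independent. By the very definition of an integral submanifold of a locally decomposable integrable $m$-multivector field, the tangent space $T_pS$ is spanned precisely by $X_1(p), \ldots, X_m(p)$, so these vectors form a basis of $T_pS$.

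Next, I would use the standard identification of the interior product of a form by a decomposable multivector with the evaluation of that form on its constituent vectors, namely
\[
(j_S^* \d\bm\xi)_p\bigl(X_1(p), \ldots, X_m(p)\bigr) = (\d\bm\xi)_p\bigl(X_1(p), \ldots, X_m(p)\bigr) = (\iota_\bfX \d\bm\xi)(p).
\]
The conservation hypothesis provides $\iota_\bfX \d\bm\xi = 0$ on all of $\mathcal{P}$, and in particular at the chosen point $p$.

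The conclusion then follows at once: $j_S^* \d\bm\xi$ is an $m$-form on the $m$-dimensional manifold $S$ whose value on a basis of the tangent space vanishes at every point, and hence $j_S^* \d\bm\xi = 0$. I do not anticipate any substantive obstacle; the only mildly delicate point is to verify that a local decomposition near $p$ is enough, but since the argument is entirely pointwise and $p \in S$ is arbitrary, patching issues are avoided, and the proof reduces to the purely algebraic identification of $\iota_\bfX \d\bm\xi$ with a top-degree evaluation.
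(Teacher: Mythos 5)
Your proof is correct and follows essentially the same route as the paper's: decompose $\bfX$ locally as $X_1\wedge\dotsb\wedge X_m$ with the $X_i$ spanning the tangent spaces of $S$, identify $\iota_\bfX\d\bm\xi$ with the evaluation $\d\bm\xi(X_1,\ldots,X_m)$, and conclude from $\iota_\bfX\d\bm\xi=0$ that the pulled-back top-degree form vanishes. Your version is, if anything, slightly more explicit about why vanishing on a single basis of each $T_pS$ suffices.
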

\begin{proof}
Let $\bfX =X_1\wedge\dotsb\wedge X_m\in\vf^m_{{\rm I},d}(\Theta_{\mathcal{L}},\omega)$ with $X_1,\ldots ,X_m\in\vf({\cal P})$ independent vector fields tangent to the ($m$-dimensional) integral submanifold $S$. Then, since $\inn{\bfX}\d\bm\xi=0$, we have
\[
0=j_S^*[\inn{\bfX}\d\bm\xi]=
j_S^*[\inn{X_1\wedge\dotsb\wedge X_m}\d\bm\xi]=
j_S^*[\d\bm\xi (X_1,\ldots ,X_m)] \,,
\]
and as $X_1,\ldots ,X_m$ are arbitrary vector fields tangent to $S$, this implies that $j_S^*\d\bm\xi=0$.
\end{proof}

\begin{remark}
\label{conslaw}
Note that for every $\bm\xi\in\df^{m-1}({\cal P})$ and $\bfX\in\vf^m_{{\rm I},d}(\Theta_{\mathcal{L}},\omega)$,
if $\bm\psi\colon M\to{\cal P}$ is an integral section of $\bfX $, taking $\bm\psi^*\bm\xi\in\df^{m-1}(M)$,
then there is a unique vector field $X_{\bm\psi^*\bm\xi}\in\vf(M)$ such that $\inn{X_{\bm\psi^*\bm\xi}}\omega=\bm\psi^*\bm\xi$.
Recall that $\eta\in\df^m(M)$ is the volume form in $M$;
then, since the {\sl divergence} of $X_{\bm\psi^*\bm\xi}$ is the function 
${\rm div}X_{\bm\psi^*\bm\xi}\in \Cinfty(M)$ defined by, 
$$
\Lie_{X_{\bm\psi^*\bm\xi}}\eta= ({\rm div}X_{\bm\psi^*\bm\xi})\,\eta \,,
$$
it follows that $({\rm div}X_{\bm\psi^*\bm\xi})\,\eta=\d({\bm\psi^*\bm\xi})$. 
Therefore, by Proposition \ref{Prop::ConQS}, if $\bm\xi\in\Omega^{m-1}(\mathcal{P})$ is a conserved form, then $\d({\bm\psi^*\bm\xi})=0$, or, equivalently, ${\rm div}X_{\bm\psi^*\bm\xi}=0$. Thus, on every compact domain, $U\subset M$, {\sl Stokes theorem} leads to the {\sl conservation law},
\[
\int_{\partial U}{\bm\psi^*\bm\xi}=\int_U\d({\bm\psi^*\bm\xi})=\int_U \left({\rm div}X_{\bm\psi^*\bm\xi}\right)\,\eta=0 \,.
\]
The form $\bm\psi^*\bm\xi\in\Omega^{m-1}(M)$ is called the \dfn{conserved current}
associated with the conserved form $\bm\xi$.
In coordinates, $\bm\psi^*\bm\xi=f^\mu\,\d x^{m-1}x_\mu$, then
$\displaystyle X_{\bm\psi^*\bm\xi}=f^\mu\parder{}{x^\mu}$,
and the conservation law reads
\[
\displaystyle {\rm div}X_{\bm\psi^*\bm\xi}=\parder{f^\mu}{x^\mu}=0\,.
\]
\end{remark}

The analogous to Proposition \ref{Prop::ConQS} in dissipative setting is the following:

\begin{proposition}
\label{prop:dis}
    Let $\bm \xi\in\df^{m-1}({\cal P})$ be a dissipative form and let $\bfX \in\vf^m_{{\rm I},d}(\Theta_{\mathcal{L}},\omega)$. Then, $\bd \bm\xi$ vanish on integral submanifolds of $\bfX $; that is, $j_S^*(\bd\bm\xi)=\bd (j_S^*\bm\xi)=0$.
\end{proposition}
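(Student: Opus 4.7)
The strategy is to mirror the proof of Proposition~\ref{Prop::ConQS}, exploiting that $\bd\bm\xi$ is a top-degree form on the $m$-dimensional integral submanifold $S$ and that $\bfX$ decomposes into independent vector fields whose restrictions span $TS$.

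First, I would write $\bfX=X_1\wedge\dotsb\wedge X_m$ with $X_1,\ldots,X_m\in\vf({\cal P})$ independent and tangent to $S$; this is possible because $\bfX\in\vf^m_{{\rm I},d}(\Theta_{\mathcal{L}},\omega)$ is integrable and locally decomposable, so the distribution spanned by the $X_i$ coincides with $TS$ along $S$. Since $\bm\xi\in\df^{m-1}({\cal P})$, the form $\bd\bm\xi$ lies in $\df^m({\cal P})$, and the defining identity $\iota_\bfX\bd\bm\xi=0$ can be restricted to $S$ as
\[
0 = j_S^*\bigl[\iota_\bfX\bd\bm\xi\bigr] = j_S^*\bigl[\bd\bm\xi(X_1,\ldots,X_m)\bigr].
\]
Because $(X_1|_p,\ldots,X_m|_p)$ is a basis of $T_pS$ at each $p\in S$ and $\bd\bm\xi$ has maximal degree $m=\dim S$, this forces $j_S^*(\bd\bm\xi)=0$, which is the first half of the claim.

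For the equality $j_S^*(\bd\bm\xi)=\bd(j_S^*\bm\xi)$, I would simply unfold Definition~\ref{def:bard} and invoke naturality of $\d$ and of $\wedge$ under pullback:
\[
j_S^*(\bd\bm\xi) = j_S^*(\d\bm\xi) + j_S^*(\sigma_{\Theta_\mathcal{L}}\wedge\bm\xi) = \d(j_S^*\bm\xi) + (j_S^*\sigma_{\Theta_\mathcal{L}})\wedge(j_S^*\bm\xi) = \bd(j_S^*\bm\xi),
\]
where on the right-hand side $\bd$ is the analogous operator on $S$ associated with the pulled-back dissipation form $j_S^*\sigma_{\Theta_\mathcal{L}}$.

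No substantial obstacle is expected: the argument is formally parallel to Proposition~\ref{Prop::ConQS}, with the exterior derivative replaced by the twisted operator $\bd$. The only subtlety worth flagging is that the dissipation condition $\iota_\bfX\bd\bm\xi=0$, a priori a scalar identity on $\mathcal{P}$, must be shown to capture the entire pullback $j_S^*(\bd\bm\xi)$; this is precisely the top-degree/basis argument used above, and it is the only place where the assumption $\bfX\in\vf^m_{{\rm I},d}(\Theta_{\mathcal{L}},\omega)$ (integrability plus local decomposability) is essentially used.
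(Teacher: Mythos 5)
Your proof is correct and follows essentially the same route as the paper, which simply states that the argument is identical to that of Proposition~\ref{Prop::ConQS} with $\d$ replaced by $\bd$. Your added justification of $j_S^*(\bd\bm\xi)=\bd(j_S^*\bm\xi)$ via naturality of pullback is a welcome detail that the paper leaves implicit.
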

\begin{proof}
The proof is the same as in Proposition \ref{Prop::ConQS}.    
\end{proof}

\begin{remark}
\label{remdislaw}
Note that, as in Remark \ref{conslaw}, for every dissipative form $\bm\xi\in\df^{m-1}({\cal P})$ and $\bfX\in\vf^m_{{\rm I},d}(\Theta_{\mathcal{L}},\omega)$,
if $\bm\psi\colon M\to{\cal P}$ is an integral section of $\bfX $,
then there is a unique vector field $X_{\bm\psi^*\bm\xi}\in\vf(M)$ such that $\inn{X_{\bm\psi^*\bm\xi}}\eta=\bm\psi^*\bm\xi$. Therefore, as $\bm\xi$ is a dissipative form and as a consequence of Proposition \ref{prop:dis}, $\bd({\bm\psi^*\bm\xi})=0$ and hence,
by Definition \ref{def:bard},
$$
\left({\rm div}X_{\bm\psi^*\bm\xi}\right)\,\eta=\d({\bm\psi^*\bm\xi})=-(\bm\psi^*\sigma_{\Theta_\mathcal{L}})\wedge(\bm\psi^*\bm\xi) \,.
$$
Then, on every compact domain, $U\subset M$, Stokes theorem leads to the following {\sl dissipation law},
\[
\int_{\partial U}{\bm\psi^*\bm\xi}=\int_U\d({\bm\psi^*\bm\xi})=
\int_U \left({\rm div}X_{\bm\psi^*\bm\xi}\right)\,\eta=
-\int_U(\bm\psi^*\sigma_{\Theta_\mathcal{L}})\wedge(\bm\psi^*\bm\xi)\,.
\]
In coordinates, if $\bm\psi^*\bm\xi=f^\mu\,\d x^{m-1}x_\mu$ and
$\displaystyle X_{\bm\psi^*\bm\xi}=f^\mu\parder{}{x^\mu}$;
taking into account the local expression of $\sigma_{\cal L}\in \Omega^1(\cal{P})$ given by \eqref{sigmaL}, the dissipation law reads
\begin{equation}
\label{dislaweq}
\parder{f^\mu}{x^\mu}=\left(\parder{L}{s^\mu}\circ\bm\psi\right)\,f^\mu\,.
\end{equation}
\end{remark}

%%%%%%%%%%%%%%%%%%%%%%%%%%%%%%%%%%%%%%%%%%%%%%%%%%%%%%%%
\subsection{Generalized symmetries}
\label{sect:gensym}
%%%%%%%%%%%%%%%%%%%%%%%%%%%%%%%%%%%%%%%%%%%%%%%%%%%%%%%%

In physics, the idea of {\sl symmetry} is associated with transformations on the phase space of the theory, which transform solutions to the equations of the theory to other solutions.
The case in which such transformations are locally generated by vector fields through their flows is also distinguished, as this is the usual situation in physics.
In the context of multicontact field theories, the first definition in this sense is as follows:

\begin{definition}
\label{defsymm}
Let $({\cal P},\Theta_{\cal L},\omega)$ be a multicontact Lagrangian system.
\begin{enumerate}
\item A \dfn{generalized symmetry} of the system is a diffeomorphism $\Phi\colon {\cal P}\to{\cal P}$ 
such that $\Phi_*(\vf_d^m(\Theta_\mathcal{L},\omega))\subseteq\vf_d^m(\Theta_\mathcal{L},\omega)$.
\item 
An \dfn{infinitesimal generalized symmetry} of the system
is a vector field $Y\in\vf(\mathcal{P})$ whose local diffeomorphisms generated by the flow of $Y$
are local symmetries or, equivalently, 
$[Y,\bfX ]\in\vf_d^m(\Theta_\mathcal{L},\omega)$, for every $\bfX \in\vf_d^m(\Theta_\mathcal{L},\omega)$, where the bracket is the Schouten--Nijenhuis bracket (see Appendix \ref{append}).
\end{enumerate}
\end{definition}

Thus, (infinitesimal) generalized symmetries transform solutions to the field equations \eqref{vfH} to other solutions and, as a consequence of Theorem \ref{solutions}, all of them are semi-holonomic multivector fields.
Furthermore, (infinitesimal) generalized symmetries allow us to generate new conserved forms from another given one. In fact:

\begin{proposition}
\label{genconsforms}
Let $\bm\xi\in\df^{m-1}({\cal P})$ be a conserved form of the multicontact Lagrangian system
$({\cal P},\Theta_{\cal L},\omega)$. Then:
\begin{enumerate}
\item 
If $\Phi\in{\Diff}({\cal P})$ is a generalized symmetry, then $\Phi^*\bm\xi$ is also a conserved form.
\item
If $Y\in\vf({\cal P})$ is an infinitesimal generalized symmetry, then $\Lie_Y\bm\xi$ is also a conserved form.
\end{enumerate} 
\end{proposition}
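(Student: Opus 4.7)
The plan is to verify, in each case, the defining property of a conserved form, namely that $\iota_\bfX\d(\cdot) = 0$ for every $\bfX\in\vf_d^m(\Theta_\mathcal{L},\omega)$.

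For part (1), I would combine the naturality of the exterior derivative, $\d(\Phi^*\bm\xi) = \Phi^*\d\bm\xi$, with the pullback--contraction identity for a diffeomorphism, $\iota_\bfX\Phi^*\alpha = \Phi^*(\iota_{\Phi_*\bfX}\alpha)$, valid for any multivector field $\bfX$ (this extends from single vector fields by $\Cinfty$-multilinearity on decomposable multivectors). Together they yield
\[
\iota_\bfX \d(\Phi^*\bm\xi) \;=\; \Phi^*\bigl(\iota_{\Phi_*\bfX}\d\bm\xi\bigr) \,.
\]
Since $\Phi$ is a generalized symmetry, $\Phi_*\bfX\in\vf_d^m(\Theta_\mathcal{L},\omega)$, and the right-hand side vanishes because $\bm\xi$ is conserved. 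Hence $\Phi^*\bm\xi$ is conserved too.

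For part (2), I would use that $\Lie_Y$ commutes with $\d$, so $\d(\Lie_Y\bm\xi) = \Lie_Y\d\bm\xi$, combined with the graded commutator identity
\[
[\Lie_Y,\iota_\bfX] \;=\; \iota_{[Y,\bfX]} \,,
\]
where $[Y,\bfX]$ denotes the Schouten--Nijenhuis bracket (this is the natural extension of the classical formula $[\Lie_Y,\iota_X] = \iota_{[Y,X]}$, and it is recorded in Appendix \ref{append}). Applying it to $\d\bm\xi$ gives
\[
\iota_\bfX\d(\Lie_Y\bm\xi) \;=\; \iota_\bfX\Lie_Y\d\bm\xi \;=\; \Lie_Y\bigl(\iota_\bfX\d\bm\xi\bigr) - \iota_{[Y,\bfX]}\d\bm\xi \,.
\]
The first term vanishes because $\bm\xi$ is conserved, and the second term vanishes because $Y$ is an infinitesimal generalized symmetry, forcing $[Y,\bfX]\in\vf_d^m(\Theta_\mathcal{L},\omega)$, and $\bm\xi$ is conserved.

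The only subtle points in this plan are getting the sign conventions correct in the graded commutator between $\Lie_Y$ and $\iota_\bfX$ for multivector fields of degree $m$, and ensuring that the pullback identity for contraction by a multivector field is the one I stated; both are standard and pinned down in the appendix, so no real obstacle arises. The remaining observations (closure under sum, and the trivial vanishing of $\Lie_\bfX = (-1)^{m+1}\iota_\bfX\d$ coming automatically from $\iota_\bfX\d = 0$ because $\iota_\bfX\bm\xi = 0$ for degree reasons) require no further argument.
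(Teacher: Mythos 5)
Your proposal is correct and follows essentially the same route as the paper: the paper phrases both items through the operator $\Lie_\bfX$ (using $\Lie_\bfX(\Phi^*\bm\xi)=\Phi^*\Lie_{\Phi_*\bfX}\bm\xi$ for item 1 and the graded commutator $\Lie_{[\bfX,Y]}=[\Lie_\bfX,\Lie_Y]$ of \eqref{liebrac} for item 2), which for an $(m-1)$-form coincides with your $\iota_\bfX\d$ computation up to the harmless sign $(-1)^{m+1}$. Your use of Proposition \ref{Prop::A1} in item 2 is an equivalent substitute for \eqref{liebrac}, and the sign ambiguity between $[Y,\bfX]$ and $[\bfX,Y]$ is irrelevant since $\vf_d^m(\Theta_\mathcal{L},\omega)$ is stable under multiplication by nonvanishing functions.
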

\begin{proof}   
It is immediate. In fact, 
for every ${\bf X}\in\vf_d^m(\Theta_\mathcal{L},\omega)$, 
for the first item,
$$
\Lie_\bfX(\Phi^*\bm\xi)= \Phi^*\Lie_\bfX\bm\xi=0 \,.
$$
And, for the second item, bearing in mind \eqref{liebrac},
$$
\Lie_\bfX(\Lie_Y\bm\xi)=\Lie_{[\bfX,Y]}\bm\xi+\Lie_Y(\Lie_\bfX\bm\xi)=0 \,,
$$
since $[\bfX,Y]$ is an infinitesimal symmetry.
\end{proof}

\begin{remark}
\label{nongenerator}
For dissipative forms, this result does not hold, unless the generalized symmetry preserves $\sigma_{\Theta_\mathcal{L}}$.
In fact, if $\bm\xi$
is a dissipative form and $\Phi^*\sigma_{\Theta_\mathcal{L}}=\sigma_{\Theta_\mathcal{L}}$; then, for every ${\bf X}\in\vf_d^m(\Theta_\mathcal{L},\omega)$,we have
\begin{align}
\iota_\bfX\bd(\Phi^*\bm\xi)&=
\iota_\bfX\big(\d(\Phi^*\bm\xi)+\sigma_{\Theta_\mathcal{L}}\wedge(\Phi^*\bm\xi)\big)=
\iota_\bfX\big(\d(\Phi^*\bm\xi)+(\Phi^*\sigma_{\Theta_\mathcal{L}})\wedge(\Phi^*\bm\xi)\big)
\\ &=
\Phi^*\big(\iota_{((\Phi^{-1})_*\bfX)}(\d\bm\xi+\sigma_{\Theta_\mathcal{L}}\wedge\bm\xi)\big)=
\Phi^*(\iota_{\bfX'}\bd\bm\xi) =0
\,,
\end{align}
since $\bfX'=(\Phi^{-1})_*\bfX\in\vf_d^m(\Theta_\mathcal{L},\omega)$.
And the same happens for infinitesimal symmetries.
\end{remark}

In classical field theories, symmetries of interest correspond to diffeomorphisms and vector fields on ${\cal P}$
which restrict to diffeomorphisms and vector fields on $M$. Thus, we define:

\begin{definition}
Let $({\cal P},\Theta_{\cal L},\omega)$ be a multicontact Lagrangian system (see the diagram \eqref{firstdiagram}).
\begin{enumerate}
\item
A \dfn{restricted symmetry} of the system
is a generalized symmetry $\Phi\colon{\cal P}\to{\cal P}$ such that it
restricts to a diffeomorphism on $M$; that is, there exists $\varphi\in{\Diff}(M)$ satisfying that
\[
\tau\circ\Phi=\varphi\circ\tau \,.
\]
\item
An \dfn{infinitesimal restricted symmetry} of the system is a vector field $Y\in\vf({\cal P})$ whose local diffeomorphisms generated by the flow of $Y$ are local restricted symmetries or, what is equivalent, it is an infinitesimal generalized symmetry which is $\tau$-projectable; that is, there exists $Z\in\vf(M)$ satisfying that
\[
\tau_*Y=Z \,.
\]
\end{enumerate}
\end{definition}

This means that these (infinitesimal) symmetries preserve the fibration $\tau\colon{\cal P}\to M$. 
In coordinates, they read as
\begin{align}
&\Phi(x^\mu,y^a,y^a_\mu,s^\mu)= \big(\phi^\mu(x^\nu),\Phi^a(x^\nu,y^b,y^b_\nu,s^\nu),\Phi^a_\mu(x^\nu,y^b,y^b_\nu,s^\nu),\Psi^\mu(x^\nu,y^b,y^b_\nu,s^\nu)\big) \,,
\\
&Y=f^\mu(x^\nu)\parder{}{x^\mu}+F^a(x^\nu,y^b,y^b_\nu,s^\nu)\parder{}{y^a}+F^a_\mu(x^\nu,y^b,y^b_\nu,s^\nu)\parder{}{y^a_\mu}+g^\mu(x^\nu,y^b,y^b_\nu,s^\nu)\parder{}{s^\mu} \,.
\end{align}
Analogously, the so-called {\sl natural symmetries} are also particularly relevant in physics.
These are transformations induced in the phase space by transformations in the configuration space.
To introduce them in this multicontact context, it is necessary to define the concept of the {\sl canonical lift} to ${\cal P}$.
Hence, consider the manifold 
$\mathscr{E}\equiv E\times_M\bigwedge^{m-1}(\Tan^*M)$,
with the following projections
\begin{equation}
\xymatrix{
& & & \bigwedge^{m-1} \cT M \ar[dr]^{\kappa} \\
\mathcal{P} \ar[rr]^{\rho} \ar@/^.5pc/[urrr]^{\tau_2} \ar@/_.5pc/[drrr]_{} & & \mathscr{E} \ar[rr]^{\rho_\circ} \ar[ur]_{\rho_2} \ar[dr]^{\rho_1} & & M \\
& & & E \ar[ur]_{\pi}
}\nonumber
\end{equation}
Let $\Phi_\mathscr{E}\colon\mathscr{E}\to~\mathscr{E}$
be a diffeomorphism such that:
\begin{itemize}
\item[-] It restricts to a diffeomorphism $\Psi\colon\bigwedge^{m-1}(\Tan^*M)\to\bigwedge^{m-1}(\Tan^*M)$; that is, $\rho_2\circ\Phi_\mathscr{E}=\Psi\circ\rho_2$;
which means that $\Phi_\mathscr{E}$ preserves the fibration of $\rho_2$.
\item[-] It restricts to a diffeomorphism $\varphi\colon M\to M$; that is, $\rho_\circ\circ\Phi_\mathscr{E}=\varphi\circ\rho_\circ$;
which means that $\Phi_\mathscr{E}$ preserves the fibration of $\rho_\circ$.
\end{itemize}
In other words, the following diagrams commute
\begin{equation}
\xymatrix{
\mathscr{E}\ar[r]^{\Phi_\mathscr{E}}\ar[d]_{\rho_2}& \mathscr{E}\ar[d]_{\rho_2} & & \mathscr{E}\ar[d]_{\rho_\circ}\ar[r]^{\Phi_\mathscr{E}} & \mathscr{E}\ar[d]_{\rho_\circ}\\
\bigwedge^{m-1} \T^*M \ar[r]^{\Psi}& \bigwedge^{m-1}\T^*M & & M\ar[r]^\varphi & M \\
}\nonumber
\end{equation}
In coordinates,
\[
\Phi_\mathscr{E}(x^\mu,y^a,s^\mu)= \big(\phi^\mu(x^\nu),\Phi^a(x^\nu,y^b,s^\nu),\Psi^\mu(s^\nu)\big) \,.
\]
Then, as $\Phi_{\mathscr{E}}\colon\mathscr{E}\to\mathscr{E}$ preserves the fibration of $\rho_2$ and the fibers of the projection $\rho_2$ are identified with $E$,
we can take the restriction of $\Phi_\mathscr{E}$ to these fibers (which is obtained by `freezing' $s^\mu$),
and it is a diffeomorphism $\Phi_E\colon E\to E$.
Thus, we can write $\Phi_\mathscr{E}=(\Phi_E,\Psi)$.

Similarly, we can consider vector fields $Y_\mathscr{E}\in\vf(\mathscr{E})$ which are $\rho_2$-projectable
and $\rho_\circ$-projectable;
that is, there exist \(Y\in\vf(\bigwedge^{m-1}(\Tan^*M))\) and \(Z\in\vf(M)\) satisfying $(\rho_2)_*Y_\mathscr{E}=Y\in\vf(\bigwedge^{m-1}(\Tan^*M))$ and
$\rho_{\circ*}Y_\mathscr{E}=Z\in\vf(M)$.
In coordinates:
\begin{equation}
\label{vfY}
Y_\mathscr{E}=f^\mu(x^\nu)\parder{}{x^\mu}+F^a(x^\nu,y^b,s^\nu)\parder{}{y^a}+g^\mu(s^\nu)\parder{}{s^\mu} \,.
\end{equation}

Then, we define:

\begin{definition}
\begin{enumerate}
\item 
The \dfn{canonical lift of a diffeomorphism} $\Phi_\mathscr{E}\in\Diff(\mathscr{E})$ to ${\cal P}$
is the diffeomorphism $\Phi\colon{\cal P}\to
{\cal P}$ such that $\Phi:=(j^1\Phi_E,\Psi)$, where $j^1\Phi_E\colon J^1\pi\to J^1\pi$ is the canonical lift of $\Phi_E$ to $J^1\pi$
(see \cite{Sau_89}).
\item
The \dfn{canonical lift of a vector field} $Y_\mathscr{E}\in\vf(\mathscr{E})$ to ${\cal P}$ is the vector field $Y =  j^1Y_\mathscr{E}\in\vf({\cal P})$ whose local diffeomorphisms generated by the flow of $Y$ are the canonical lifts of the local diffeomorphisms generated by the flow of $Y_\mathscr{E}$.
\end{enumerate}
\end{definition}

In particular, in coordinates,
$$
j^1Y_\mathscr{E}=f^\mu\parder{}{x^\mu}+F^a\parder{}{y^a}-\left(\parder{F^a}{x^\mu}-
y^a_\nu\parder{f^\nu}{x^\mu}+y^b_\mu\parder{F^a}{y^b}\right)
\parder{}{y^a_\mu}+g^\mu\parder{}{s^\mu} \,,
$$
where $f^\mu = f^\mu(x^\nu)$ and $g^\mu = g^\mu(s^\nu)$. Recall that $\rho\colon{\cal P}\to\mathscr{E}$ is the natural projection, and then note that
\[
\rho\circ\Phi=\Phi_\mathscr{E}\circ\rho\,, \qquad
\rho_*(j^1Y_\mathscr{E})=Y_\mathscr{E} \,.
\]

\begin{definition}
Let $({\cal P},\Theta_{\cal L},\omega)$ be a multicontact Lagrangian system.
\begin{enumerate}
\item
A \dfn{natural symmetry} of the system
is a generalized symmetry $\Phi\colon{\cal P}\to{\cal P}$ such that it is the canonical lift of a diffeomorphism $\Phi_\mathscr{E}\in{\Diff}\,(\mathscr{E})$
which restricts to two diffeomorphisms on $M$ and $\bigwedge^{m-1}(\Tan^*M)$.
\item
An \dfn{infinitesimal natural symmetry} of the system
is an infinitesimal generalised symmetry $Y\in\vf({\cal P})$ such that it is the canonical lift of a $\rho_0$-projectable and $\rho_2$-projectable vector field $Y_\mathscr{E}\in\vf(\mathscr{E})$.
\end{enumerate}
\end{definition}

By definition, (infinitesimal) natural symmetries are restricted symmetries.

%%%%%%%%%%%%%%%%%%%%%%%%%%%%%%%%%%%%%%%%%%%%%%%%%%%%%%%%
\subsection{Noether symmetries, Noether's theorem}
%%%%%%%%%%%%%%%%%%%%%%%%%%%%%%%%%%%%%%%%%%%%%%%%%%%%%%%%

To establish {\sl Noether's theorem} it is necessary to consider a special type of symmetries:

 \begin{definition}
 \label{Noethersym}
Let $({\cal P},\Theta_{\cal L},\omega)$ be a multicontact Lagrangian system.
\begin{enumerate}
\item
A \dfn{Noether symmetry} of the system
is a diffeomorphism $\Phi\colon{\cal P}\to{\cal P}$ satisfying that:
\begin{enumerate}
\item[(i)]
$\Phi^*\Theta_{\mathcal{L}}=\Theta_{\mathcal{L}}$.
\end{enumerate}
A \dfn{strong Noether symmetry} is a Noether symmetry which preserves the whole multicontact structure; namely, it satisfies (i) and
\begin{enumerate}
\item[(ii)]
$\Phi^*\omega=\omega$.
\end{enumerate}
A (strong) Noether symmetry $\Phi\in{\Diff}({\cal P})$ is said to be a \dfn{restricted Noether symmetry} if it restricts to a diffeomorphism $\varphi\colon M\to M$; that is, such that $\tau\circ\Phi=\varphi\circ\tau$.

A (strong) Noether symmetry is said to be a \dfn{natural Noether symmetry} if $\Phi$ is the canonical lift of a diffeomorphism $\Phi_\mathscr{E}\in{\Diff}\,(\mathscr{E})$
which restricts to two diffeomorphisms on $M$ and $\bigwedge^{m-1}(\Tan^*M)$.
\item
An \dfn{infinitesimal Noether symmetry} of the system
is a vector field $Y\in\vf({\cal P})$
such that the local diffeomorphisms generated by the flow of $Y$ are local Noether symmetries;
or, equivalently:
\begin{enumerate}
\item[(i)]
$\Lie_Y\Theta_{\mathcal{L}}=0$.
\end{enumerate}
An \dfn{infinitesimal strong  Noether symmetry} is an infinitesimal Noether symmetry which satisfies (i) and 
\begin{enumerate}
\item[(ii)]
$\Lie_Y\omega=0$.
\end{enumerate}
An infinitesimal (strong) Noether symmetry $Y\in\mathfrak{X}(\cal P)$ is said to be a \dfn{restricted infinitesimal (strong) Noether symmetry} if $Y$ is a $\tau$-projectable vector field.

An infinitesimal (strong) Noether symmetry is said to be \dfn{natural infinitesimal (strong) Noether symmetry} if $Y\in\mathfrak{X}(\cal P)$ is the canonical lift of a $\rho_0$-projectable and $\rho_2$-projectable vector field $Y_\mathscr{E}\in\vf(\mathscr{E})$.
\end{enumerate}
\end{definition}

In particular, in coordinates, an infinitesimal natural Noether symmetry reads
$$
Y=f^\mu\parder{}{x^\mu}+F^a\parder{}{y^a}-\left(\parder{F^a}{x^\mu}-
y^a_\nu\parder{f^\nu}{x^\mu}+y^b_\mu\parder{F^a}{y^b}\right)
\parder{}{y^a_\mu}+g^\mu\parder{}{s^\mu} \,,
$$
where $g^\mu = g^\mu(s^\nu)$ and $f^\mu = f^\mu(x^\nu)$.

\begin{remark}
It is immediate to prove that for (strong) Noether symmetries $\Phi_1,\Phi_2\in \Diff(\mathcal{P})$, one obtains that their composition is also a (strong) Noether symmetry. Analogously, if $Y_1, Y_2\in\mathfrak{X}({\cal P})$ are infinitesimal (strong) Noether symmetries, it follows that $[Y_1, Y_2]$ is also an infinitesimal (strong) Noether symmetry.    
\end{remark}

The properties of these kinds of Noether symmetries are as follows:

\begin{lemma}
\label{Reebinv}
Let $({\cal P},\Theta_{\cal L},\omega)$ be a multicontact Lagrangian system, and
let $\mathfrak{R}_\mathcal{L}$ be the Reeb distribution related to $(\Theta_\mathcal{L},\omega)$.
\begin{enumerate}
\item
If $\Phi\in{\Diff}({\cal P})$ is a strong Noether symmetry, then,
for every $R\in\mathfrak{R}_\mathcal{L}$, we have that $\Phi_*R\in\mathfrak{R}_\mathcal{L}$.
\item 
If $Y\in\vf({\cal P})$ is an infinitesimal strong Noether symmetry, then,
for every $R\in\mathfrak{R}_\mathcal{L}$, we have that $[Y,R]\in\mathfrak{R}_\mathcal{L}$.
\end{enumerate}
\end{lemma}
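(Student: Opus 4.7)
The plan is to verify the two defining conditions of the Reeb distribution from \eqref{Reebdef}: membership in $\Gamma(\ker\omega)$ and the condition that $\iota_{\cdot}\dd\Theta_\mathcal{L}$ lies in $\mathcal{A}^m(\ker\omega)$. The proof reduces to bookkeeping with the naturality of the interior contraction under pull-back/push-forward and with the standard identity $\iota_{[Y,X]}=\Lie_Y\iota_X-\iota_X\Lie_Y$. I expect no conceptual obstacle: both parts will follow by a direct computation from $\Phi^*\Theta_\mathcal{L}=\Theta_\mathcal{L}$, $\Phi^*\omega=\omega$ (resp.\ $\Lie_Y\Theta_\mathcal{L}=0$, $\Lie_Y\omega=0$), together with the auxiliary observation that a strong Noether symmetry maps sections of $\ker\omega$ to sections of $\ker\omega$.

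For item (1), I would first record the intertwining identity $\iota_{\Phi_*X}\alpha=(\Phi^{-1})^*(\iota_X\Phi^*\alpha)$. Applied to $X=R$ and $\alpha=\omega$, the invariance $\Phi^*\omega=\omega$ gives $\iota_{\Phi_*R}\omega=(\Phi^{-1})^*\iota_R\omega=0$, so $\Phi_*R\in\Gamma(\ker\omega)$; the very same argument, with $R$ replaced by an arbitrary $Z\in\Gamma(\ker\omega)$, shows that $\Phi_*Z\in\Gamma(\ker\omega)$. Next, using $\Phi^*\dd\Theta_\mathcal{L}=\dd\Theta_\mathcal{L}$, one obtains $\iota_{\Phi_*R}\dd\Theta_\mathcal{L}=(\Phi^{-1})^*\iota_R\dd\Theta_\mathcal{L}$; contracting this with an arbitrary $Z\in\Gamma(\ker\omega)$ and moving the pull-back across yields $(\Phi^{-1})^*\bigl(\iota_{\Phi_*Z}\iota_R\dd\Theta_\mathcal{L}\bigr)$, which vanishes because $\iota_R\dd\Theta_\mathcal{L}\in\mathcal{A}^m(\ker\omega)$ and $\Phi_*Z\in\Gamma(\ker\omega)$. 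Hence $\iota_{\Phi_*R}\dd\Theta_\mathcal{L}\in\mathcal{A}^m(\ker\omega)$ and therefore $\Phi_*R\in\mathfrak{R}_\mathcal{L}$.

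For item (2), the plan is analogous in infinitesimal form. Using $\iota_{[Y,R]}=\Lie_Y\iota_R-\iota_R\Lie_Y$, the hypothesis $\Lie_Y\omega=0$ immediately gives $\iota_{[Y,R]}\omega=\Lie_Y\iota_R\omega=0$, so $[Y,R]\in\Gamma(\ker\omega)$. The same commutator identity applied to $\dd\Theta_\mathcal{L}$, combined with $\Lie_Y\dd\Theta_\mathcal{L}=\dd\,\Lie_Y\Theta_\mathcal{L}=0$, reduces $\iota_{[Y,R]}\dd\Theta_\mathcal{L}$ to $\Lie_Y\iota_R\dd\Theta_\mathcal{L}$. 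To finish I would show that $\Lie_Y$ preserves $\mathcal{A}^m(\ker\omega)$: for $\alpha\in\mathcal{A}^m(\ker\omega)$ and $Z\in\Gamma(\ker\omega)$ one computes $\iota_Z\Lie_Y\alpha=\Lie_Y\iota_Z\alpha-\iota_{[Y,Z]}\alpha=-\iota_{[Y,Z]}\alpha$, and $[Y,Z]\in\Gamma(\ker\omega)$ follows from $\Lie_Y\omega=0$ and $\iota_Z\omega=0$ by the very same Cartan identity, so the last expression vanishes.

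The only step that requires any care is the last one, namely that $\Lie_Y$ (resp.\ the push-forward by $\Phi$) normalises the module $\Gamma(\ker\omega)$ and hence preserves $\mathcal{A}^m(\ker\omega)$; once this is in hand the two statements are essentially one-line manipulations of $\iota$, $\Lie$ and $\dd$ on the invariant forms $\Theta_\mathcal{L}$ and $\omega$. Nowhere do the finer structural hypotheses of Definition \ref{premulticontactdef} beyond condition \eqref{prekeromega} enter the argument, which clarifies why both $\Phi^*\omega=\omega$ and $\Phi^*\Theta_\mathcal{L}=\Theta_\mathcal{L}$ (not merely the first) are genuinely needed for the conclusion.
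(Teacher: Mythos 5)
Your proposal is correct and follows essentially the same route as the paper: verify the two defining conditions of $\mathfrak{R}_\mathcal{L}$ by showing that $\Phi_*$ (resp.\ $\Lie_Y$) preserves $\Gamma(\ker\omega)$ and $\mathcal{A}^m(\ker\omega)$ and then using $\Phi^*\Theta_\mathcal{L}=\Theta_\mathcal{L}$ (resp.\ $\Lie_Y\dd\Theta_\mathcal{L}=0$); the only blemish is a harmless slip where the contraction identity produces $(\Phi^{-1})_*Z$ rather than $\Phi_*Z$, which changes nothing since both push-forwards preserve $\Gamma(\ker\omega)$. Your explicit Cartan-identity computation for the infinitesimal case is in fact more detailed than the paper, which merely asserts that item (2) follows from item (1) and the definition.
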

\begin{proof}
Let $\Phi$ be a strong Noether symmetry. First, if $Z\in\ker\omega$, then $\Phi_*Z\in\ker\omega$.
In fact,
$$
\inn{Z}\omega=0 \quad \Longrightarrow\quad
0=(\Phi^{-1})^*(\iota_Z\omega)= \iota_{(\Phi_*Z)}\big((\Phi^{-1})^*\omega\big)=\iota_{(\Phi_*Z)}\omega \,.
$$
Second, if $\alpha\in{\cal A}^{m}(\ker\omega)$,
then $\Phi^*\alpha\in{\cal A}^{m}(\ker\omega)$.
In fact, for every $Z\in\ker\omega$,
$$
\inn{Z}\alpha=0 \quad \Longrightarrow\quad
0=\Phi^*(\inn{Z}\alpha)= \inn{((\Phi^{-1})_*Z)}=\inn{Z'}(\Phi^*\alpha) \,,
$$
since $Z'=(\Phi^{-1})_*Z$ is an arbitrary element of $\ker\omega$, as $Z$ is. 
Finally, for every $R\in\mathfrak{R}_\mathcal{L}$, there is $\alpha\in{\cal A}^{m}(\ker\omega)$ such that $\iota_R\d\Theta_\mathcal{L}=\alpha$; therefore,
$$
0=(\Phi^{-1})^*\big(\iota_R\d\Theta_\mathcal{L}-\alpha\big)=
\iota_{(\Phi_*R)}\big(\d(\Phi^{-1})^*\Theta_\mathcal{L}\big)-(\Phi^{-1})^*\alpha=
\iota_{(\Phi_*R)}\d\Theta_\mathcal{L}-\alpha' \,,
$$
since $\alpha'=\Phi^*\alpha\in{\cal A}^{m}(\ker\omega)$ and, hence, $\Phi_*R\in\mathfrak{R}_\mathcal{L}$.

The result for infinitesimal strong Noether symmetries is a consequence of the definition and the above result.
\end{proof}

\begin{lemma}
\label{Prop:Rsigma}
Let $({\cal P},\Theta_{\cal L},\omega)$ be a multicontact Lagrangian system.
\begin{enumerate}
\item
If $\Phi\in\Diff({\cal P})$ is a strong Noether symmetry, then
$\Phi^*\sigma_{\Theta_\mathcal{L}}=\sigma_{\Theta_\mathcal{L}}$.
\item
If $Y\in\vf({\cal P})$ is an infinitesimal strong Noether symmetry, then
$\Lie_Y\sigma_{\Theta_\mathcal{L}}=0$.
\end{enumerate}
\end{lemma}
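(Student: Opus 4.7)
The plan is to exploit the uniqueness clause of Proposition \ref{sigma} together with the Reeb invariance established in Lemma \ref{Reebinv}. In both parts the strategy is to transport the defining identity
\[
\sigma_{\Theta_{\mathcal L}}\wedge \iota_R\Theta_{\mathcal L}=\iota_R\d\Theta_{\mathcal L}\qquad(R\in\mathfrak{R}_{\mathcal L})
\]
along $\Phi$, respectively $Y$, and then verify that the transported one-form still satisfies the same defining identity for every Reeb field. Uniqueness of the dissipation form then forces it to coincide with $\sigma_{\Theta_{\mathcal L}}$, which is exactly the desired conclusion.

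For part (1), I would pick an arbitrary $R\in\mathfrak{R}_{\mathcal L}$ and set $R':=\Phi_*R$. By Lemma \ref{Reebinv}, $R'\in\mathfrak{R}_{\mathcal L}$, so it satisfies the defining identity. Pulling that identity back by $\Phi$ and using naturality $\Phi^*(\iota_{R'}\alpha)=\iota_R(\Phi^*\alpha)$ together with $\Phi^*\Theta_{\mathcal L}=\Theta_{\mathcal L}$ and $\Phi^*\d\Theta_{\mathcal L}=\d\Theta_{\mathcal L}$, one obtains
\[
(\Phi^*\sigma_{\Theta_{\mathcal L}})\wedge \iota_R\Theta_{\mathcal L}=\iota_R\d\Theta_{\mathcal L}\qquad\text{for every }R\in\mathfrak{R}_{\mathcal L}.
\]
Proposition \ref{sigma} then forces $\Phi^*\sigma_{\Theta_{\mathcal L}}=\sigma_{\Theta_{\mathcal L}}$.

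For part (2), I would apply $\Lie_Y$ directly to the defining identity. Using the Leibniz rule, the commutator identity $\Lie_Y\iota_R=\iota_{[Y,R]}+\iota_R\Lie_Y$, and $\Lie_Y\Theta_{\mathcal L}=0$ (whence also $\Lie_Y\d\Theta_{\mathcal L}=0$), every term containing a Lie derivative of $\Theta_{\mathcal L}$ or $\d\Theta_{\mathcal L}$ drops out and one is left with
\[
(\Lie_Y\sigma_{\Theta_{\mathcal L}})\wedge \iota_R\Theta_{\mathcal L}+\sigma_{\Theta_{\mathcal L}}\wedge \iota_{[Y,R]}\Theta_{\mathcal L}=\iota_{[Y,R]}\d\Theta_{\mathcal L}.
\]
By Lemma \ref{Reebinv}, $[Y,R]\in\mathfrak{R}_{\mathcal L}$, so applying the defining identity to $[Y,R]$ cancels the last two terms, leaving $(\Lie_Y\sigma_{\Theta_{\mathcal L}})\wedge \iota_R\Theta_{\mathcal L}=0$ for every $R\in\mathfrak{R}_{\mathcal L}$. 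Hence $\sigma_{\Theta_{\mathcal L}}+\Lie_Y\sigma_{\Theta_{\mathcal L}}$ also fulfils the defining identity of the dissipation form, and uniqueness once again yields $\Lie_Y\sigma_{\Theta_{\mathcal L}}=0$.

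The only real subtlety, and the point that distinguishes strong Noether symmetries from ordinary Noether symmetries, is that the Reeb preservation of Lemma \ref{Reebinv} is what allows uniqueness of $\sigma_{\Theta_{\mathcal L}}$ to be invoked: without the hypothesis $\Phi^*\omega=\omega$ (respectively $\Lie_Y\omega=0$), one has no reason to expect $\Phi_*R$ or $[Y,R]$ to remain in $\mathfrak{R}_{\mathcal L}$, and the whole argument collapses. This also retroactively explains the hypothesis imposed in Remark \ref{nongenerator}: for strong Noether symmetries, preservation of $\sigma_{\Theta_{\mathcal L}}$ is automatic rather than an extra assumption.
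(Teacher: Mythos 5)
Your proposal is correct and follows essentially the same route as the paper: transport the defining identity $\sigma_{\Theta_{\mathcal L}}\wedge\iota_R\Theta_{\mathcal L}=\iota_R\d\Theta_{\mathcal L}$ along $\Phi$ (resp.\ apply $\Lie_Y$), use the Reeb-invariance of Lemma \ref{Reebinv} to stay inside $\mathfrak{R}_{\mathcal L}$, and invoke the uniqueness clause of Proposition \ref{sigma}. The only (cosmetic) difference is that in part (1) you pull back the identity written for $\Phi_*R$ to land on $R$, whereas the paper pushes $R$ to $(\Phi^{-1})_*R$; both hinge on the same lemma.
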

\begin{proof}
Recall that $\sigma_{\Theta_\mathcal{L}}$ is the unique one-form satisfying $\sigma_{\Theta_\mathcal{L}}\wedge\iota_R\Theta_{\cal L}=\iota_R\d\Theta_{\cal L}$, for every $R\in \mathfrak{R}_\mathcal{L}$.
\begin{enumerate}
\item 
Let $\Phi\in{\Diff}({\cal P})$ be a strong Noether symmetry. Then,
\begin{align}
0 &=
\Phi^*\big(\sigma_{\Theta_\mathcal{L}}\wedge\iota_R\Theta_{\cal L}-\iota_R\d\Theta_{\cal L}\big)=
(\Phi^*\sigma_{\Theta_\mathcal{L}})\wedge\iota_{((\Phi^{-1})_*R)}(\Phi^*\Theta_{\cal L})-\iota_{((\Phi^{-1})_*R)}\d(\Phi^*\Theta_{\cal L})
\\ &=
(\Phi^*\sigma_{\Theta_\mathcal{L}})\wedge\iota_{R'}\Theta_{\cal L}-\iota_{R'}\d\Theta_{\cal L} \,,
\end{align}
and, as $R'=(\Phi^{-1})_*R$ is an arbitrary element of $\mathfrak{R}_\mathcal{L}$, and $\sigma_{\Theta_\mathcal{L}}$ is unique,
this implies that $\Phi^*\sigma_{\Theta_\mathcal{L}}=\sigma_{\Theta_\mathcal{L}}$.
\item
The result for infinitesimal strong Noether symmetries is a consequence of the definition and the
above item. However, we can prove it explicitly as follows:

Let $Y\in\mathfrak{X}(\cal P)$ be an infinitesimal strong Noether symmetry. By Lemma \ref{Reebinv}, it follows that $[Y,R]=R'$ takes values in $\mathfrak{R}_\mathcal{L}$. Then,
    \begin{align}
\Lie_Y(\sigma_{\Theta_\mathcal{L}}\wedge\iota_R\Theta_{\cal L})
&=\Lie_Y\sigma_{\Theta_\mathcal{L}}\wedge\Theta_{\cal L}+\sigma_{\Theta_\mathcal{L}}\wedge\Lie_Y\iota_R\Theta_{\cal L} \\ 
&=\Lie_Y\sigma_{\Theta_\mathcal{L}}\wedge\Theta_{\cal L}+\sigma_{\Theta_\mathcal{L}}\wedge\iota_{[Y,R]}\Theta_{\cal L}+\sigma_{\Theta_\mathcal{L}}\wedge\iota_R\Lie_Y\Theta_{\cal L}\\
&=\Lie_Y\sigma_{\Theta_\mathcal{L}}\wedge\Theta_{\cal L}+\sigma_{\Theta_\mathcal{L}}\wedge\iota_{[Y,R]}\Theta_{\cal L}=\Lie_Y\sigma_{\Theta_\mathcal{L}}\wedge\Theta_{\cal L}+\sigma_{\Theta_\mathcal{L}}\wedge\iota_{R'}\Theta_{\cal L}\,,   
\end{align}
and
\[
\Lie_Y\iota_R\d\Theta_{\cal L}=\iota_{[Y,R]}\d\Theta_{\cal L}+\iota_R\Lie_Y\d\Theta_{\cal L}=\iota_{R'}\d\Theta_{\cal L}\,.
\]
Thus, by definition of $\sigma_{\Theta_\mathcal{L}}$, it follows that
\begin{multline*}
\Lie_Y\sigma_{\Theta_\mathcal{L}}\wedge\iota_R\Theta_{\cal L}+\sigma_{\Theta_\mathcal{L}}\wedge\iota_{R'}\Theta_{\cal L}-\iota_{R'}\d\Theta_{\cal L}=
\\
\Lie_Y\sigma_{\Theta_\mathcal{L}}\wedge\iota_R\Theta_{\cal L}+\sigma_{\Theta_\mathcal{L}}\wedge\iota_{R}\Theta_{\cal L}-\iota_{R}\d\Theta_{\cal L}=\left(\Lie_Y\sigma_{\Theta_\mathcal{L}}+\sigma_{\Theta_{\cal L}} \right)\wedge\iota_R\Theta_{\cal L}-\iota_{R}\d\Theta_{\cal L}=0\,. 
\end{multline*}
Then, by uniqueness of $\sigma_{\Theta_\mathcal{L}}$ one gets that $\Lie_Y\sigma_{\Theta_\mathcal{L}}=0$, and the statement follows. 
\end{enumerate}
\end{proof}

Therefore, as a first consequence of this last lemma, we can establish the following relation:

\begin{proposition}
\label{N-gen}
Let $(\mathcal{P},\Theta_\mathcal{L},\omega)$ be a multicontact Lagrangian system.
\begin{enumerate}
\item 
Every strong Noether symmetry $\Phi\in{\Diff}(\cal P)$ of the system
is a generalized symmetry.
\item
Every infinitesimal strong Noether symmetry $Y\in\mathfrak{X}(\cal P)$ of the system
is an infinitesimal generalized symmetry.
\end{enumerate}
\end{proposition}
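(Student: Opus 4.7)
The plan is to verify the two characterizing conditions of $\mathfrak{X}_d^m(\Theta_\mathcal{L},\omega)$ are preserved, using the hypothesis that $\Phi$ (resp.\ $Y$) preserves both $\Theta_\mathcal{L}$ and $\omega$, together with the already established fact (Lemma \ref{Prop:Rsigma}) that strong Noether symmetries also preserve the dissipation form $\sigma_{\Theta_\mathcal{L}}$. Since the operator $\bd$ is built from $\dd$ and $\sigma_{\Theta_\mathcal{L}}$, preservation of $\sigma_{\Theta_\mathcal{L}}$ combined with preservation of $\Theta_\mathcal{L}$ will immediately give preservation of $\bd\Theta_\mathcal{L}$, which is the essential step.

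For item (1), take $\bfX\in\vf_d^m(\Theta_\mathcal{L},\omega)$. First I would note that since $\Phi$ is a diffeomorphism, $\Phi_*\bfX$ is again locally decomposable. From $\Phi^*\Theta_\mathcal{L}=\Theta_\mathcal{L}$, $\Phi^*\omega=\omega$, and $\Phi^*\sigma_{\Theta_\mathcal{L}}=\sigma_{\Theta_\mathcal{L}}$ (Lemma \ref{Prop:Rsigma}), I obtain
\[
\Phi^*\bd\Theta_\mathcal{L} = \Phi^*\dd\Theta_\mathcal{L} + (\Phi^*\sigma_{\Theta_\mathcal{L}})\wedge(\Phi^*\Theta_\mathcal{L}) = \bd\Theta_\mathcal{L}\,.
\]
Then for any $k$-form $\alpha$ with $\Phi^*\alpha=\alpha$, the naturality of contraction gives $\iota_{\Phi_*\bfX}\alpha = (\Phi^{-1})^*\iota_\bfX\alpha$; applying this with $\alpha=\Theta_\mathcal{L}$, $\bd\Theta_\mathcal{L}$, and $\omega$ transports the three defining conditions $\iota_\bfX\Theta_\mathcal{L}=0$, $\iota_\bfX\bd\Theta_\mathcal{L}=0$, $\iota_\bfX\omega\neq 0$ to the same conditions on $\Phi_*\bfX$, so $\Phi_*\bfX\in\vf_d^m(\Theta_\mathcal{L},\omega)$.

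For item (2), I would use the identity $\iota_{[Y,\bfX]}\alpha = \Lie_Y\iota_\bfX\alpha - \iota_\bfX\Lie_Y\alpha$ valid for the Schouten--Nijenhuis bracket of a vector field with an $m$-multivector field (see Appendix \ref{append}). From $\Lie_Y\Theta_\mathcal{L}=0$, $\Lie_Y\omega=0$, and $\Lie_Y\sigma_{\Theta_\mathcal{L}}=0$ (Lemma \ref{Prop:Rsigma}), I get $\Lie_Y\bd\Theta_\mathcal{L}=0$; applying the identity with $\alpha=\Theta_\mathcal{L}$ and $\alpha=\bd\Theta_\mathcal{L}$ then yields $\iota_{[Y,\bfX]}\Theta_\mathcal{L}=0$ and $\iota_{[Y,\bfX]}\bd\Theta_\mathcal{L}=0$. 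Local decomposability of $[Y,\bfX]$ follows because the flow of $Y$ acts by local diffeomorphisms (which preserve decomposability), and $[Y,\bfX]$ is, up to sign, the derivative of $(\Phi_t)_*\bfX$ at $t=0$ along this flow.

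The main subtlety, which I would address explicitly, is the transversality condition in the infinitesimal case: from $\Lie_Y\omega=0$ one only gets $\iota_{[Y,\bfX]}\omega = Y(\iota_\bfX\omega)$, which need not be nonvanishing. This is not really an obstacle to mapping solutions to solutions, because $(\Phi_t)_*\bfX$ stays in $\vf_d^m(\Theta_\mathcal{L},\omega)$ for small $t$ by item (1) (its transversality is $(\Phi_t^{-1})^*\iota_\bfX\omega\neq 0$), so the infinitesimal variation $[Y,\bfX]$ is tangent to the space of solutions in the appropriate sense. I would therefore interpret the infinitesimal statement as the preservation of the defining algebraic equations (the two zero-equations above), consistent with the convention fixed in Definition \ref{defsymm} and the normalization remarks of Section \ref{sect:gensym}.
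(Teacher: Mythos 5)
Your proposal is correct and follows essentially the same route as the paper: both arguments hinge on Lemma \ref{Prop:Rsigma} to get $\Phi^*\sigma_{\Theta_\mathcal{L}}=\sigma_{\Theta_\mathcal{L}}$ (resp.\ $\Lie_Y\sigma_{\Theta_\mathcal{L}}=0$), hence preservation of $\bd\Theta_{\mathcal L}$, and then transport the defining equations via naturality of contraction in the finite case and via the identity $\iota_{[Y,\mathbf{X}]}=\Lie_Y\iota_{\mathbf{X}}-\iota_{\mathbf{X}}\Lie_Y$ (Proposition \ref{Prop::A1}) in the infinitesimal case.

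The one point where you genuinely diverge from the paper is the transversality condition in the infinitesimal case, and there your treatment is the more careful one. The paper computes $\iota_{[Y,\mathbf{X}]}\omega=\Lie_Y f$ and asserts that ``the function $f$ can be chosen so that $\Lie_Y f$ is nonvanishing''; but the defining property of an infinitesimal generalized symmetry quantifies over \emph{every} $\mathbf{X}\in\vf_d^m(\Theta_\mathcal{L},\omega)$, including the normalized representative with $\iota_{\mathbf{X}}\omega=1$, for which $\Lie_Y f=0$. So the paper's fix does not literally close the gap. You identify exactly this issue and resolve it by reading the infinitesimal statement as preservation of the two algebraic equations together with the fact that $(\Phi_t)_*\mathbf{X}$ remains in $\vf_d^m(\Theta_\mathcal{L},\omega)$ for small $t$ by item (1), so that $[Y,\mathbf{X}]$ is the infinitesimal variation of a curve of solutions. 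That reading is consistent with the flow-based clause of Definition \ref{defsymm} and is the honest way to state the result; if one insists on the bracket formulation, the class $\vf_d^m(\Theta_\mathcal{L},\omega)$ should really be replaced by $\vf^m(\Theta_\mathcal{L})$ (dropping the open transversality condition) in the conclusion. Your remark on local decomposability of $[Y,\mathbf{X}]$, which the paper omits, is also a worthwhile addition.
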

\begin{proof}
For every $\mathbf{X}\in\vf_d^m(\Theta_\mathcal{L},\omega)$, one has:
\begin{enumerate}
\item 
If $\Phi\in{\Diff}(\cal P)$ is a strong Noether symmetry, then
\begin{equation}
\iota_{(\Phi_*\bfX)}\Theta_\mathcal{L}=
\iota_{(\Phi_*\bfX)}((\Phi^{-1})^*\Theta_\mathcal{L})= 
(\Phi^{-1})^*(\iota_\bfX\Theta_\mathcal{L})=0 \,.
\end{equation}
Furthermore,
\begin{align}
\iota_{(\Phi_*\bfX)}\bd\Theta_\mathcal{L}&=
\iota_{(\Phi_*\bfX)}\big(\d\Theta_\mathcal{L}+\sigma_{\Theta_\mathcal{L}}\wedge\Theta_\mathcal{L}\big)
\\ &=
\iota_{(\Phi_*\bfX)}\big(\d((\Phi^{-1})^*\Theta_\mathcal{L})+((\Phi^{-1})^*\sigma_{\Theta_\mathcal{L}})\wedge((\Phi^{-1})^*\Theta_\mathcal{L})\big)
\\ &=
(\Phi^{-1})^*\big(\iota_\bfX(\d\Theta_\mathcal{L}+\sigma_{\Theta_\mathcal{L}}\wedge\Theta_\mathcal{L})\big)=
(\Phi^{-1})^*(\iota_\bfX\bd\Theta_\mathcal{L}) =0
\,.
\end{align}
Finally, the transversality condition is $\iota_{\bf X}\omega=f$, where $f$ is a nonvanishing function; then,
$$
0=(\Phi^{-1})^*(\iota_{\bf X}\omega-f)=
\iota_{(\Phi_*\bfX)}((\Phi^{-1})^*\omega)-(\Phi^{-1})^*f=
\iota_{(\Phi_*\bfX)}\omega-(\Phi^{-1})^*f \,.
$$
Since $f$ is an arbitrary nonvanishing function, we can chose it so that $(\Phi^{-1})^*f$ is also nonvanishing. Thus, the transversality condition holds for $\Phi_*\bfX$.
As a consequence of all this, $\Phi_*\bfX\in\vf_d^m(\Theta_\mathcal{L},\omega)$,
and hence $\Phi$ is a generalized symmetry.
\item
Although the result for infinitesimal strong Noether symmetries and infinitesimal generalized symmetries is a consequence of their definitions and of the above item,
it can be also obtained through an explicit calculation as follows:
First, recall that $\Lie_Y\Theta_{\cal L}=0$; then, for every $\mathbf{X}\in\vf_d^m(\Theta_\mathcal{L},\omega)$, one has
(see Proposition \ref{Prop::A1}):
\[
\inn{[Y,\mathbf{X}]}\Theta_\mathcal{L} = \Lie_{Y}\inn{\bfX}\Theta_\mathcal{L} - \inn{\bfX}\Lie_Y\Theta_\mathcal{L}=0\,.
\]
Second,
\begin{align*}
\inn{[Y,\mathbf{X}]}\bd\Theta_{\cal L} &= \inn{[Y,\mathbf{X}]}\d\Theta_\mathcal{L}+\inn{[Y,\mathbf{X}]}\left(\sigma_{\Theta_\mathcal{L}}\wedge\Theta_\mathcal{L}\right)
\\ &=
\Lie_Y\inn{\mathbf{X}}\d\Theta_\mathcal{L}-\inn\bfX\Lie_Y\Theta_\mathcal{L}+\Lie_Y\inn{\mathbf{X}}\left(\sigma_{\Theta_\mathcal{L}}\wedge\Theta_\mathcal{L}\right)-\inn{\mathbf{X}}\Lie_Y\left(\sigma_{\Theta_\mathcal{L}}\wedge\Theta_\mathcal{L}\right)
\\ &=
\Lie_Y\inn{\mathbf{X}}\d\Theta_\mathcal{L}+\Lie_Y\inn{\mathbf{X}}\left(\sigma_{\Theta_\mathcal{L}}\wedge\Theta_\mathcal{L}\right)-\inn{\mathbf{X}}\left(\Lie_Y\sigma_{\Theta_\mathcal{L}}\wedge\Theta_\mathcal{L}\right)
\\ &=\Lie_Y\inn{\mathbf{X}}\bd\Theta_\mathcal{L}-\inn{\mathbf{X}}\left(\Lie_Y\sigma_{\Theta_\mathcal{L}}\wedge\Theta_\mathcal{L}\right)=
-\inn{\mathbf{X}}\left(\Lie_Y\sigma_{\Theta_\mathcal{L}}\wedge\Theta_\mathcal{L}\right)=0\,,
\end{align*}
where the last equality follows because $\Lie_Y\sigma_{\Theta_\mathcal{L}}=0$, which holds by Proposition \ref{Prop:Rsigma}.

Finally, for the $\overline\tau$-transversality condition, recall that, if
$\bfX_\circ\in\vf_d^m(\Theta_\mathcal{L},\omega)$ satisfies that
$\iota_{\bfX_\circ}\omega=1$; then, any multivector field
$\bfX\in\{\bfX_\circ\}$ is such that $\bfX=f\bfX_\circ$,
with $f$ a nonvanishing function, and hence the transversality condition is $\inn{\bfX}\omega=f$; therefore, one has,
\[
\inn{[Y,\bfX]}\omega=
\Lie_{Y}\inn{\bfX}\omega-\inn{\bfX}\Lie_Y\omega=\Lie_{Y}f \,.
\]
Analogously, the function $f$ can be chosen so that $\Lie_Yf$ is nonvanishing. Thus, the transversality condition is also satisfied by $[Y,\bfX]$.
Therefore, $[Y,\mathbf{X}]\in \vf_d^m(\Theta_\mathcal{L},\omega)$, for every $\mathbf{X}\in\vf_d^m(\Theta_\mathcal{L},\omega)$, yields that $Y$ is an infinitesimal generalized symmetry.
\end{enumerate}
\end{proof}

In addition, we have:
\begin{proposition}
\label{sectsol}
Let $\Phi\in{\Diff}({\cal P})$ be a natural strong Noether symmetry of a multicontact Lagrangian system $({\cal P},\Theta_{\cal L},\omega)$.
Then $\Phi$ maps holonomic sections solution to the equation \eqref{sect1L} into holonomic section solutions. 

\noindent As a consequence, the same holds for infinitesimal natural strong Noether symmetries $Y\in\mathfrak{X}({\cal P})$.
\end{proposition}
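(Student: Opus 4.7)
The plan is to define, from any holonomic solution $\bm\psi\colon M\to\mathcal{P}$ of \eqref{sect1L}, the transformed section $\bm\psi':=\Phi\circ\bm\psi\circ\varphi^{-1}$, where $\varphi\in\Diff(M)$ is the base diffeomorphism induced by $\Phi$ through its natural structure, and then verify in order that (i) $\bm\psi'$ is a section of $\tau$, (ii) it is holonomic, and (iii) it satisfies the multicontact Lagrangian equations \eqref{sect1L}. Point (i) is immediate from $\tau\circ\Phi=\varphi\circ\tau$, which holds because natural symmetries are in particular restricted symmetries.

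For (ii), I would exploit the natural hypothesis, which provides the factorisation $\Phi=(j^1\Phi_E,\Psi)$ in the decomposition $\mathcal{P}=J^1\pi\times_M\bigwedge^{m-1}\Tan^*M$. Writing $\bm\psi=(j^1\phi,s)$ for some section $\phi\colon M\to E$, and applying the standard jet-prolongation identity
\[
j^1\Phi_E\circ j^1\phi\circ\varphi^{-1}=j^1(\Phi_E\circ\phi\circ\varphi^{-1})
\]
from \cite{Sau_89}, one sees that $\bm\psi'$ is the canonical prolongation of $(\Phi_E\circ\phi\circ\varphi^{-1},\Psi\circ s\circ\varphi^{-1})\colon M\to\mathscr{E}$, and is therefore holonomic.

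For (iii), the key preliminary step is to show $\Phi^*\bd\Theta_{\cal L}=\bd\Theta_{\cal L}$. Expanding $\bd$ via Definition \ref{def:bard} and combining the strong Noether conditions $\Phi^*\Theta_{\cal L}=\Theta_{\cal L}$ and $\Phi^*\omega=\omega$ with Lemma \ref{Prop:Rsigma} (which supplies $\Phi^*\sigma_{\Theta_{\cal L}}=\sigma_{\Theta_{\cal L}}$) gives the identity at once. Both conditions in \eqref{sect1L} then transport to $\bm\psi'$ by routine pullback manipulation, using that $(\Phi^{-1})_*X$ runs over $\vf(\mathcal{P})$ as $X$ does. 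For the infinitesimal consequence, the flow of $Y$ is a one-parameter family of local natural strong Noether symmetries by Definition \ref{Noethersym}, so each flow step preserves the set of holonomic solutions, whence the claim follows.

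The main conceptual point where I would be most careful is the holonomy-preservation identity for jet prolongations invoked in (ii): this is precisely the place where the \emph{naturality} of $\Phi$ (and not merely restrictedness) is indispensable, since an arbitrary restricted strong Noether symmetry need not respect the $J^1\pi$-factor of $\mathcal{P}$. Once this identity is in place, everything else reduces to a routine bookkeeping of pullbacks.
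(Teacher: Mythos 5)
Your proposal is correct and follows essentially the same route as the paper's proof: define $\bm\psi'=\Phi\circ\bm\psi\circ\varphi^{-1}$, use restrictedness for sectionhood, use the factorisation $\Phi=(j^1\Phi_E,\Psi)$ and the jet-prolongation identity for holonomy, and transport both equations of \eqref{sect1L} by pullback using $\Phi^*\Theta_{\mathcal{L}}=\Theta_{\mathcal{L}}$ together with $\Phi^*\sigma_{\Theta_{\mathcal{L}}}=\sigma_{\Theta_{\mathcal{L}}}$ from Lemma \ref{Prop:Rsigma}. The only cosmetic difference is that you record the transformed second component as $\Psi\circ s\circ\varphi^{-1}$ (which is in fact the accurate form) where the paper writes $s\circ\varphi^{-1}$; this does not affect the argument since holonomy only constrains the $J^1\pi$-factor.
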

\begin{proof}
Let $\bm\psi\colon M\to {\cal P}$ be a section solution to the equations \eqref{sect1L}.
If $\Phi\in{\Diff}({\cal P})$ is a natural strong Noether symmetry, it is also a restricted symmetry and, hence, it restricts to a diffeomorphism $\varphi\colon M\to M$.
Therefore, $\Phi$ preserves the fibration $\tau\colon {\cal P}\to M$ and, hence, $\Phi\circ\bm\psi\circ\varphi^{-1}$
is another section of $\tau$. Then, on the one hand, 
as $\Phi^*\Theta_{\mathcal{L}}=\Theta_{\mathcal{L}}$, we have:
$$
(\Phi\circ\bm\psi\circ\varphi^{-1})^*\Theta_{\mathcal{L}}=
(\varphi^{-1})^*[\bm\psi^*(\Phi^*\Theta_{\mathcal{L}})]=
(\varphi^{-1})^*(\bm\psi^*\Theta_{\mathcal{L}})=0 \,.
$$
On the other hand, for every $X\in\vf({\cal P})$, 
\begin{align}
(\Phi\circ\bm\psi\circ\varphi^{-1})^*[\inn{X}\bd\Theta_{\mathcal{L}}]&=
(\bm\psi\circ\varphi^{-1})^*[(\Phi^*\inn{X}\bd\Theta_{\mathcal{L}})]=
(\bm\psi\circ\varphi^{-1})^*[\inn{\Phi_*^{-1}X}(\Phi^*\bd\Theta_{\mathcal{L}})]
\\ &=
(\bm\psi\circ\varphi^{-1})^*[\inn{\Phi_*^{-1}X}\big(\Phi^*(\d\Theta_{\mathcal{L}}+\sigma_{\Theta_\mathcal{L}}\wedge\Theta_\mathcal{L})\big)]
\\ &=
(\bm\psi\circ\varphi^{-1})^*[\inn{\Phi_*^{-1}X}\big(\d(\Phi^*\Theta_{\mathcal{L}})+(\Phi^*\sigma_{\Theta_\mathcal{L}})\wedge(\Phi^*\Theta_\mathcal{L})\big)]
\\ &=
(\bm\psi\circ\varphi^{-1})^*[\inn{\Phi_*^{-1}X}(\d\Theta_{\mathcal{L}}+\sigma_{\Theta_\mathcal{L}}\wedge\Theta_\mathcal{L})]
\\ &=
(\varphi^{-1})^*[\bm\psi^*\inn{X'}\bd\Theta_{\mathcal{L}}]=0 \,,
\end{align}
since $\Phi^*\sigma_{\Theta_\mathcal{L}}=\sigma_{\Theta_\mathcal{L}}$,
by the second item of Lemma \ref{Prop:Rsigma}, and $X'=\Phi_*^{-1}X\in\vf({\cal P})$ is an arbitrary vector field.
Thus, $\Phi\circ\bm\psi\circ\varphi^{-1}$ is also a section solution to the field equations.

Now, if $\bm\psi$ is a holonomic section on ${\cal P}$,
then it is of the form $\bm\psi=(j^1\phi,s)$,
where $j^1\phi\colon M\to J^1\pi$ is a holonomic section on $J^1\pi$ (see the beginning of Section \ref{sect:mcs}).
Moreover, $\Phi$ is a natural symmetry, then it is the canonical lift to ${\cal P}$ of a diffeomorphism on $\mathscr{E}$, and then, it is of the form
$\Phi=(j^1\Phi_E,\Psi)$, where $j^1\Phi_E\in{\Diff}(J^1\pi)$ and $\Psi\in\Diff(\bigwedge^{m-1}\T^*M)$.
Therefore,
$$
\Phi\circ\bm\psi\circ\varphi^{-1}=
(j^1\Phi_E\circ\ j^1\phi\circ\varphi^{-1},s\circ\varphi^{-1})=
(j^1(\Phi_E\circ\phi)\circ\varphi^{-1},s\circ\varphi^{-1})\,,
$$
and $j^1(\Phi_E\circ\phi)\circ\varphi^{-1}$ is a holonomic section on $J^1\pi$ (see \cite{Sau_89}). Hence, by definition, $\Phi\circ\bm\psi$ is also a holonomic section on ${\cal P}$.
\end{proof}

Taking into account Remark \ref{nongenerator}, another immediate consequence of Lemma \ref{Prop:Rsigma} 
is the analogous to Proposition \ref{genconsforms} for dissipative forms and strong Noether symmetries:

\begin{proposition}
\label{Nsym-disform}
Let $\bm\xi\in\df^{m-1}({\cal P})$ be a dissipative form of a multicontact Lagrangian system
$({\cal P},\Theta_{\cal L},\omega)$. Therefore:
\begin{enumerate}
\item 
If $\Phi\in{\Diff}({\cal P})$ is a strong Noether symmetry, then $\Phi^*\bm\xi$ is also a dissipative form.
\item
If $Y\in\vf({\cal P})$ is an infinitesimal strong Noether symmetry, then $\Lie_Y\bm\xi$ is also a dissipative form.
\end{enumerate} 
\end{proposition}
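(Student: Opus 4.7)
The plan is to exploit the two ingredients already in place: Lemma~\ref{Prop:Rsigma}, which guarantees that a strong Noether symmetry preserves the dissipation form $\sigma_{\Theta_\mathcal{L}}$, and Proposition~\ref{N-gen}, which tells us that a strong Noether symmetry is a generalized symmetry and therefore permutes the space $\vf_d^m(\Theta_\mathcal{L},\omega)$. With these ingredients, the result essentially reduces to the computation already sketched in Remark~\ref{nongenerator}.

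For item~1, I would fix an arbitrary $\bfX\in\vf_d^m(\Theta_\mathcal{L},\omega)$ and verify that $\iota_\bfX\bd(\Phi^*\bm\xi)=0$. The key step is to pull the operator $\Phi^*$ through $\bd$: since by Lemma~\ref{Prop:Rsigma} we have $\Phi^*\sigma_{\Theta_\mathcal{L}}=\sigma_{\Theta_\mathcal{L}}$, one can rewrite
\[
\bd(\Phi^*\bm\xi)=\d(\Phi^*\bm\xi)+\sigma_{\Theta_\mathcal{L}}\wedge \Phi^*\bm\xi=\Phi^*\bigl(\d\bm\xi+\sigma_{\Theta_\mathcal{L}}\wedge\bm\xi\bigr)=\Phi^*\bd\bm\xi.
\]
Contracting with $\bfX$ and using naturality of the interior product under the diffeomorphism $\Phi$, the expression becomes $\Phi^*\bigl(\iota_{\bfX'}\bd\bm\xi\bigr)$ where $\bfX':=(\Phi^{-1})_*\bfX$; by Proposition~\ref{N-gen} this $\bfX'$ still lies in $\vf_d^m(\Theta_\mathcal{L},\omega)$, so dissipativity of $\bm\xi$ gives $0$.

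For item~2, I would translate the same argument to the infinitesimal setting. The crucial observation is that $\Lie_Y$ commutes with $\bd$: expanding $\Lie_Y\bd\bm\xi=\d\Lie_Y\bm\xi+(\Lie_Y\sigma_{\Theta_\mathcal{L}})\wedge\bm\xi+\sigma_{\Theta_\mathcal{L}}\wedge\Lie_Y\bm\xi$ and invoking $\Lie_Y\sigma_{\Theta_\mathcal{L}}=0$ from Lemma~\ref{Prop:Rsigma}(2), one obtains $\Lie_Y\bd=\bd\Lie_Y$. Combining this with the graded commutation rule $\Lie_Y\iota_\bfX-\iota_\bfX\Lie_Y=\iota_{[Y,\bfX]}$ (Proposition~\ref{Prop::A1} in the appendix), and using Proposition~\ref{N-gen}(2) to know $[Y,\bfX]\in\vf_d^m(\Theta_\mathcal{L},\omega)$, the computation closes:
\[
\iota_\bfX\bd(\Lie_Y\bm\xi)=\iota_\bfX\Lie_Y\bd\bm\xi=\Lie_Y(\iota_\bfX\bd\bm\xi)-\iota_{[Y,\bfX]}\bd\bm\xi=0.
\]

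There is no substantial obstacle here: the strong Noether hypothesis is tailored precisely to sidestep the failure mode flagged in Remark~\ref{nongenerator}. The only delicate point is the commutation between $\bd$ and the symmetry operation, and this rests entirely on the invariance of $\sigma_{\Theta_\mathcal{L}}$ provided by Lemma~\ref{Prop:Rsigma}; without the preservation of $\omega$ (i.e.\ for a plain Noether symmetry that is not strong), the argument would break at exactly that step.
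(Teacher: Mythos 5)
Your proposal is correct and follows exactly the route the paper intends: the paper gives no explicit proof, stating the proposition as an ``immediate consequence'' of Remark~\ref{nongenerator} (whose computation your item~1 reproduces verbatim) combined with Lemma~\ref{Prop:Rsigma} and Proposition~\ref{N-gen}. Your item~2 merely spells out the infinitesimal details that the Remark leaves implicit, using the same ingredients ($\Lie_Y\sigma_{\Theta_\mathcal{L}}=0$ and Proposition~\ref{Prop::A1}), so there is nothing substantively different to compare.
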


Finally, focusing on infinitesimal symmetries,
we can state the fundamental result that relates Noether symmetries with dissipated forms:

\begin{theorem}[Noether's Theorem]
    Let $Y\in\mathfrak{X}({\cal P})$ be an infinitesimal Noether symmetry of a multicontact Lagrangian system $({\cal P},\Theta_{\cal L},\omega)$, and let ${\bm \xi}_Y:=\iota_Y\Theta_{\cal L}$. Then, for every $\mathbf{X}\in \vf_d^m(\Theta_{\cal L})$, one has,
    \[
    \iota_{\mathbf{X}}\bd{\bm \xi}_Y=0\,.
    \]
    In other words, ${\bm \xi}_{Y}\in \Omega^{m-1}({\cal P})$ is a dissipative form associated with the infinitesimal Noether symmetry $Y\in \mathfrak{X}({\cal P})$.
\end{theorem}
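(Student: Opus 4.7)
The plan is to derive an explicit identity for $\bd\bm{\xi}_Y$ that exhibits it as $-\iota_Y\bd\Theta_{\mathcal{L}}$ plus a scalar multiple of $\Theta_{\mathcal{L}}$. Once that identity is in hand, both terms are annihilated by $\iota_{\mathbf{X}}$ via the two defining conditions $\iota_{\mathbf{X}}\Theta_{\mathcal{L}}=0$ and $\iota_{\mathbf{X}}\bd\Theta_{\mathcal{L}}=0$ of $\vf_d^m(\Theta_{\mathcal{L}})$.

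Concretely, I would first unfold the definition of $\bd$ and use Cartan's magic formula together with the hypothesis $\Lie_Y\Theta_{\mathcal{L}}=0$ to obtain
\begin{equation*}
\bd\bm{\xi}_Y \;=\; \d(\iota_Y\Theta_{\mathcal{L}}) + \sigma_{\Theta_{\mathcal{L}}}\wedge \iota_Y\Theta_{\mathcal{L}}
\;=\; -\iota_Y\d\Theta_{\mathcal{L}} + \sigma_{\Theta_{\mathcal{L}}}\wedge \iota_Y\Theta_{\mathcal{L}}\,.
\end{equation*}
Next, applying the graded Leibniz rule of $\iota_Y$ to the wedge of the $1$-form $\sigma_{\Theta_{\mathcal{L}}}$ with the $m$-form $\Theta_{\mathcal{L}}$ gives
\begin{equation*}
\sigma_{\Theta_{\mathcal{L}}}\wedge\iota_Y\Theta_{\mathcal{L}} \;=\; (\iota_Y\sigma_{\Theta_{\mathcal{L}}})\,\Theta_{\mathcal{L}} - \iota_Y(\sigma_{\Theta_{\mathcal{L}}}\wedge\Theta_{\mathcal{L}})\,.
\end{equation*}
Substituting and recognising $\d\Theta_{\mathcal{L}} + \sigma_{\Theta_{\mathcal{L}}}\wedge\Theta_{\mathcal{L}} = \bd\Theta_{\mathcal{L}}$ yields the key identity
\begin{equation*}
\bd\bm{\xi}_Y \;=\; -\iota_Y\bd\Theta_{\mathcal{L}} + (\iota_Y\sigma_{\Theta_{\mathcal{L}}})\,\Theta_{\mathcal{L}}\,.
\end{equation*}

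To conclude, I contract with $\mathbf{X}\in\vf_d^m(\Theta_{\mathcal{L}})$. The second term vanishes immediately since $\iota_{\mathbf{X}}\Theta_{\mathcal{L}}=0$. For the first, I use the graded commutation rule $\iota_{\mathbf{X}}\iota_Y = (-1)^m \iota_Y\iota_{\mathbf{X}}$, which follows from $\iota_X\iota_Y=-\iota_Y\iota_X$ applied termwise on a local decomposition $\mathbf{X}=X_1\wedge\cdots\wedge X_m$ and extended by linearity (cf.\ Appendix \ref{append}). Combined with $\iota_{\mathbf{X}}\bd\Theta_{\mathcal{L}}=0$ this gives $\iota_{\mathbf{X}}\iota_Y\bd\Theta_{\mathcal{L}}=0$, whence $\iota_{\mathbf{X}}\bd\bm{\xi}_Y=0$, as required.

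There is no real obstacle here beyond careful bookkeeping of signs and graded Leibniz rules. It is worth emphasising that, in contrast to Lemma \ref{Prop:Rsigma} and Proposition \ref{Nsym-disform}, this argument invokes only $\Lie_Y\Theta_{\mathcal{L}}=0$; neither $\Lie_Y\omega=0$ nor $\Lie_Y\sigma_{\Theta_{\mathcal{L}}}=0$ is needed, which is precisely why an ordinary (as opposed to strong) infinitesimal Noether symmetry suffices to produce a dissipated form.
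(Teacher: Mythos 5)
Your argument is correct and is essentially the paper's own proof: both rest on Cartan's formula with $\Lie_Y\Theta_{\mathcal{L}}=0$, the graded Leibniz rule giving $\sigma_{\Theta_{\mathcal{L}}}\wedge\iota_Y\Theta_{\mathcal{L}}=(\iota_Y\sigma_{\Theta_{\mathcal{L}}})\Theta_{\mathcal{L}}-\iota_Y(\sigma_{\Theta_{\mathcal{L}}}\wedge\Theta_{\mathcal{L}})$, recombination into $-\iota_Y\bd\Theta_{\mathcal{L}}+(\iota_Y\sigma_{\Theta_{\mathcal{L}}})\Theta_{\mathcal{L}}$, and the sign rule $\iota_{\mathbf{X}}\iota_Y=(-1)^m\iota_Y\iota_{\mathbf{X}}$. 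The only difference is presentational (you isolate the identity for $\bd\bm{\xi}_Y$ before contracting, the paper contracts with $\mathbf{X}$ throughout), and your closing observation that only condition (i) is needed matches the paper's own remark.
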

\begin{proof}
    Recall that $\Lie_Y\Theta_{\cal L}=0$. Then,
\begin{align}
\iota_\bfX \bd\bm\xi_Y&=\iota_\bfX \bd \iota_Y\Theta_\mathcal{L}=\iota_\bfX \left(\d\iota_Y\Theta_\mathcal{L}+\sigma_{\Theta_\mathcal{L}}\wedge\iota_Y\Theta_\mathcal{L}\right)=-\iota_\bfX \iota_Y\d\Theta_\mathcal{L}+\iota_\bfX (\sigma_{\Theta_\mathcal{L}}\wedge\iota_Y\Theta_\mathcal{L})
\\
&=(-1)^{m+1}\iota_Y\iota_\bfX \d\Theta_\mathcal{L} + \iota_\bfX \left(-\iota_Y(\sigma_{\Theta_\mathcal{L}}\wedge\Theta_\mathcal{L})+(\iota_Y\sigma_{\Theta_\mathcal{L}})\Theta_\mathcal{L}\right)
\\
&=(-1)^{m+1}\iota_Y\iota_\bfX \bd\Theta_\mathcal{L}+(\iota_Y\sigma_{\Theta_\mathcal{L}})\iota_\bfX \Theta_\mathcal{L}=0\,
\end{align}
and the statement follows.
\end{proof}

\begin{remark}
It is interesting to point out that the only requirement that an infinitesimal symmetry is Noether (condition (i) in Definition \ref{Noethersym}) is needed to prove Noether's Theorem.
Nevertheless, the Noether symmetry must be strong (conditions (i) and (ii))
in order to prove Lemmas \ref{Reebinv} and \ref{Prop:Rsigma} 
and hence Propositions \ref{N-gen} (the relation with generalized symmetries) and \ref{Nsym-disform}.
Finally, the conditions of being natural (and then restricted) are also needed to prove Proposition \ref{sectsol}
(they transform holonomic section solutions into others).
\end{remark}

%%%%%%%%%%%%%%%%%%%%%%%%%%%%%%%%%%%%%%%%%%%%%%%%%%%%%%%%
\subsection{Conservation and dissipation laws and symmetries for multicontact Hamiltonian systems}
%%%%%%%%%%%%%%%%%%%%%%%%%%%%%%%%%%%%%%%%%%%%%%%%%%%%%%%%

First, to introduce {\sl natural symmetries} for multicontact Hamiltonian field theories, we need to define previously canonical lifts to ${\cal P}^*$.
Then, as at the end of Section \ref{sect:gensym},
consider the manifold $\mathscr{E}\equiv E\times_M\bigwedge^{m-1}(\Tan^*M)$, and the diagram
\begin{equation}
\xymatrix{
& & & \bigwedge^{m-1}(\cT M) \ar[dr]^{\kappa} \\
\mathcal{P}^* \ar[rr]^{\overline{\rho}} \ar@/^.5pc/[urrr]^{\overline{\tau}_2} \ar@/_.5pc/[drrr]_{} & & \mathscr{E} \ar[rr]^{\rho_\circ} \ar[ur]_{\rho_1} \ar[dr]^{\rho_2} & & M \\
& & & E \ar[ur]_{\pi}
}\nonumber
\end{equation}
Let $\Phi_\mathscr{E}\colon\mathscr{E}\to~\mathscr{E}$
be a diffeomorphism such that it restricts to a diffeomorphism $\Psi\colon\bigwedge^{m-1}(\Tan^*M)\to\bigwedge^{m-1}(\Tan^*M)$; that is, $\rho_1\circ\Phi_\mathscr{E}=\Psi\circ\rho_1$;
what means that $\Phi_\mathscr{E}$ preserves the fibration of $\rho_1$.
Let  $\Phi_E\colon E\to E$ be the restriction of $\Phi$ to the fibers
of $\rho_1$, which are identified with $E$,
and write $\Phi_\mathscr{E}=(\Phi_E,\Psi)$.
In the same way, let $Y_\mathscr{E}\in\vf(\mathscr{E})$ be a
$\rho_1$-projectable vector field.

\begin{definition}
\begin{enumerate}
\item 
The \dfn{canonical lift of a diffeomorphism} $\Phi_\mathscr{E}$ to ${\cal P}^*$
is the diffeomorphism $\overline\Phi\colon{\cal P}^*\to
{\cal P}^*$ such that $\overline\Phi:=(\overline\Phi_E,\Psi)$, where $\overline\Phi_E\colon J^1\pi^*\to J^1\pi^*$ is the canonical lift of $\Phi_E$ to $J^1\pi^*$ (see \cite{GR_24}).
\item
The \dfn{canonical lift of a vector field} to ${\cal P}^*$ of a $\rho_2$-projectable vector field $Y_\mathscr{E}\in\vf(\mathscr{E})$
is the vector field $\overline Y\in\vf({\cal P}^*)$ whose local diffeomorphisms generated by the flow of $\overline Y$ are the canonical lifts of the local diffeomorphisms generated by the flow of $Y_\mathscr{E}$ (see \cite{GR_24}).
\end{enumerate}
\end{definition}

In particular, in coordinates, starting from \eqref{vfY} we obtain
$$
\overline Y=f^\mu\parder{}{x^\mu}+F^a\frac{\partial}{\partial y^a}+\left(\parder{f^\mu}{x^\nu} p_a^\nu-\parder{f^\nu}{x^\nu} p_a^\mu-\parder{F^b}{y^a}p_b^\mu\right)\frac{\partial}{\partial p_a^\mu}
+g^\mu\parder{}{s^\mu} \,.
$$
If $\overline{\rho}\colon{\cal P}^*\to\mathscr{E}$
is the natural projection, then
$\overline{\rho}\circ\overline\Phi=\Phi_\mathscr{E}\circ\overline{\rho}$ and
$\overline{\rho}_*(\overline Y)=Y_\mathscr{E}$.

All definitions of conserved and dissipative forms, and symmetries;
as well as all the results previously obtained in the preceding sections for multicontact Lagrangian systems (including {\sl Noether's theorem\/}),
are established and proven in the same manner for multicontact Hamiltonian systems;
except for those that specifically refer to the holonomy condition of the solutions of the field equations.
In particular, this is the case of Proposition \ref{sectsol} which now
refers to sections $\bm\psi\colon M\to\mathcal{P}^*$,
and holds for restricted strong Noether symmetries, not necessarily natural.

%%%%%%%%%%%%%%%%%%%%%%%%%%%%%%%%%%%%%%%%%%%%%%%%%%%
\section{Application: a vibrating string with a linear damping}
\label{applications}
%%%%%%%%%%%%%%%%%%%%%%%%%%%%%%%%%%%%%%%%%%%%%%%%%%%

The multicontact formulation of this system has been analyzed in detail in \cite{LGMRR_23, LGMRR_25}. Consider ${\cal P}=\R^2\times\oplus^2 \T \R\times\R^2$ with coordinates $\{t,x,y,y_t,y_x,s^t,s^x\}$ and the Lagrangian describing a vibrating string with linear damping, of the form
\[
L(t,x,y,y_t,y_x,s^t,s^x)=\frac{1}{2}(\rho y_t^2 -\tau y_x^2)-\gamma s^t\, ,
\]
where $\rho,\tau,\gamma\in \R$. For holonomic sections
$\displaystyle\bm{\psi}(x^\nu)=\Big(x^\mu,y^a(x^\nu),\parder{y^a}{x^\mu}(x^\nu),s^\mu(x^\nu)\Big)$,
it leads to the Herglotz--Euler--Lagrange equation \eqref{ELeqs1} which is
\begin{equation}
\label{eq:vibrating-string-dampings}
\parder{^2y}{t^2}-\frac{\tau}{\rho}\parder{^2y}{x^2}=-\gamma\parder{y}{t} \,.
\end{equation}
Then, using \eqref{thetacoor1}, it follows that
\begin{align*}
\Theta_{\cal L} &= -\rho y_t\d y\wedge \d x-\tau y_x\d y\wedge \d t+\left(\rho y_t^2-\tau y_x^2-\frac{1}{2}(\rho y_t^2 -\tau y_x^2)+\gamma s^t{}\right)\d t\wedge \d x\\
&\quad + \d s^t\wedge \d x-\d s^x\wedge \d t
\\
&=
-\rho y_t\d y\wedge \d x-\tau y_x\d y\wedge \d t +\left(\frac{1}{2}\rho y^2_t-\frac{1}{2}\tau y_x^2+\gamma s^t\right)\d t \wedge \d x+\d s^t\wedge \d x- \d s^x\wedge \d t\,,
\end{align*}
and the energy Lagrangian function is
$\displaystyle E_\mathcal{L}=\frac{1}{2}\rho y^2_t-\frac{1}{2}y_x^2+\gamma s^t$.
Moreover,
\[
\omega=\d t\wedge \d x\,.
\]
Consider the infinitesimal symmetry $Y=\tparder{}{y}$. It is clear that the vector field $Y$ is an infinitesimal strong Noether symmetry, since it leaves $\Theta_{\cal L}\in\Omega^2(\cal P)$ and $\omega\in\Omega^2(\cal P)$ invariant. Then,
\[
{\bm \xi}_Y:=\inn{Y}\Theta_{\cal L}=-\rho y_t\d x-\tau y_x\d t\,.
\]
Furthermore, by \eqref{sigmaL}, one gets
\[
\sigma_{\cal L}=-\gamma \d t\,,
\]
and the infinitesimal strong Noether symmetry $Y$ also leaves $\sigma_{\cal L}$ invariant, as stated in Lemma \ref{Prop:Rsigma}. 
To see that $\boldsymbol{\xi}_Y\in\Omega^1(\cal P)$ is a dissipative form, one has to check that $\iota_{\mathbf{X}}\bd\boldsymbol{\xi}_Y=0$, for every $\mathbf{X}\in\mathfrak{X}^2(\Theta_{\cal L})$. First,
\begin{equation}
\label{disformweq}
\bd {\bm\xi_Y}=\d {\bm\xi_Y}+\sigma \wedge {\bm\xi_Y}=\rho \d x\wedge \d y_t+\tau \d t\wedge \d y_x+\rho\gamma y_t\d x\wedge \d t\,.  
\end{equation}
To find $\mathbf{X}$, one has that
\begin{align}
\sigma_{\cal L}\wedge\Theta_{\cal L}&=-\gamma\rho y_t\d t\wedge \d y\wedge \d x +\gamma \d t\wedge \d s^t\wedge \d x\,,
\\
\d \Theta_{\cal L}&=-\rho \d y_t\wedge \d y\wedge \d x -\tau \d y_x\wedge\d y\wedge \d t+\left(\rho y_t\d y_t-\tau y_x\d y_x+\gamma \d s^t\right)\wedge \d t\wedge \d x\,.
\end{align}
Therefore,
\begin{align*}
\bd\Theta_{\cal L} &= \d\Theta_{\cal L}+\sigma_{\cal L}\wedge \Theta_{\cal L}\\
&=\gamma\rho y_t\d t\wedge \d x\wedge \d y-\rho\d y_t\wedge \d y\wedge \d x-\tau \d y_x\wedge \d y\wedge \d t
\\ &\quad
+\rho y_t\d y_t\wedge \d t\wedge \d x -\tau y_x \d y_x\wedge \d t\wedge \d x\,.
\end{align*}
Now, $\mathbf{X}$ must be solution to the equations
\begin{equation}
\label{Ex::SolConditions}
\inn{\mathbf{X}}\Theta_{\cal L}=0\,,\qquad \inn{\mathbf{X}}\bd\Theta_{\cal L}=0\,,\qquad \inn{\mathbf{X}}\omega=1\,.
\end{equation}
Recall that $\mathbf{X}=(X_1,X_2)$, where
\begin{align*}
    X_1 &= A_1\frac{\partial}{\partial t}+A_2\frac{\partial}{\partial x}+A_3\frac{\partial}{\partial y}+A_4\frac{\partial}{\partial y_t}+A_5\frac{\partial}{\partial y_x}+A_6\frac{\partial}{\partial x^t}+A_7\frac{\partial}{\partial s^x}\,,\\
    X_2 &= B_1\frac{\partial}{\partial t}+B_2\frac{\partial}{\partial x}+B_3\frac{\partial}{\partial y}+B_4\frac{\partial}{\partial y_t}+B_5\frac{\partial}{\partial y_x}+B_6\frac{\partial}{\partial x^t}+B_7\frac{\partial}{\partial s^x}\,.
\end{align*}
Imposing equations \eqref{Ex::SolConditions}, $\mathbf{X}$ takes the form
\begin{align*}
    X_1 &= \frac{\partial}{\partial t}+y_t\frac{\partial}{\partial y}+\left(\frac{\tau}{\rho}B_5-\gamma y_t\right)\frac{\partial}{\partial y_t}+A_5\frac{\partial}{\partial y_x}+\left(L-B_7\right)\frac{\partial}{\partial s^t}+A_7\frac{\partial}{\partial s^x}\,,\\
    X_2 &= \frac{\partial}{\partial x}+y_x\frac{\partial}{\partial y}+B_4\frac{\partial}{\partial y_t}+B_5\frac{\partial}{\partial y_x}+B_6\frac{\partial}{\partial x^t}+B_7\frac{\partial}{\partial s^x}\,,
\end{align*}
where $A_5,A_7,B_4,B_5,B_6,B_7\in \Cinfty(\cal P)$.

Using the expression \eqref{disformweq}, it is immediate to check that $\inn{\mathbf{X}}\bd\boldsymbol{\xi}_Y=0$, and thus $\boldsymbol{\xi}_Y$ is a dissipated form.
To obtain the corresponding dissipation law, consider a holonomic section
$\displaystyle\bm\psi(t,x)=\Big(t,x,y(t,x),\parder{y}{t}(t,x),\parder{y}{x}(t,x),s^t(t,x),s^x(t,x)\Big)$
solution to the field equations \eqref{sect1L}.
Then, following the procedure explained in Remark 
\ref{remdislaw}, first we have
$$
\bm\psi^*\bm\xi_Y=-\rho\parder{y}{t}\d x-\tau\parder{y}{x}\d t\,,
$$
and therefore
$$
X_{\bm\psi^*\bm\xi_Y}=-\rho\parder{y}{t}\parder{}{t}+\tau\parder{y}{x}\parder{}{x} \,,
$$
and the corresponding dissipation law \eqref{dislaweq} is the field equation \eqref{eq:vibrating-string-dampings} itself.

For the Hamiltonian formalism,
${\cal P}^*= \R^2\times\bigoplus^2\T^*\R\times\R^2$,
with coordinates $\{t,x,y,p^x,p^t,s^t,s^x\}$.
The Legendre map reads
$$
\mathcal{FL}(t,x,y,y_t,y_x,s^t,s^x)=
(t,x,y,p^t=\rho y_t,p^x=-\tau y_x,s^t,s^x) \,,
$$
and it is a diffeomorphism. Then, the Hamiltonian function of the vibrating string with a linear dumping reads
\[H(t,x,y,p^t,p^x,s^t,s^x)=(\mathcal{FL}^{-1})^*E_\mathcal{L}=\frac{1}{2\rho}(p^t)^2-\frac{1}{2\tau}(p^x)^2-\gamma s^t\,.\]
and
$$
\Theta_{\cal H}=
-\rho y_t\d y\wedge \d x-\tau y_x\d y\wedge \d t +\left(\frac{(p^t)^2}{2}-\frac{(p^x)^2}{2}+\gamma s^t\right)\d t \wedge \d x+\d s^t\wedge \d x- \d s^x\wedge \d t\,.
$$
The expression of the Noether symmetry is the same, and everything develops analogously to the Lagrangian case.

%%%%%%%%%%%%%%%%%%%%%%%%%%%%%%%%%%%%%%%%%%%%%%%%%%%%%%%%
\section{Conclusions and outlook}
\label{Sec:Conclusions}
%%%%%%%%%%%%%%%%%%%%%%%%%%%%%%%%%%%%%%%%%%%%%%%%%%%%%%%%

In this work, we have extended the fundamental concepts of symmetries, conservation laws, and Noether's Theorem from the classical multisymplectic field theory to the setting of action-dependent field theories within the multicontact framework. By formulating both the Lagrangian and Hamiltonian descriptions in this context, we have showed how symmetries naturally give rise to dissipation laws, thereby redefining the role of conserved and dissipated quantities in this extended framework.  

A key contribution of our study is the characterization of symmetries associated with both the field equations and the multicontact structure. This led to the introduction of strong Noether symmetries, which provide a direct connection between symmetries and dissipated quantities. The establishment of a generalized Noether's Theorem in this setting reinforces the fundamental link between these symmetries and the underlying physical and geometric structures governing action-dependent field theories.  

While our results offer a comprehensive extension of Noether's Theorem in the multicontact framework, several open questions remain. Future research could further explore the implications of strong Noether symmetries in specific physical models, as well as their role in more general geometric settings. Additional directions of study include the analysis of equivalent Lagrangians for multicontact field theories and their associated symmetries, as well as the extension of Noether's Theorem in this context. Furthermore, investigating symmetries, conservation, and dissipation laws in singular (premulticontact) Lagrangian and Hamiltonian field theories, including the study of geometric gauge symmetries, is an important avenue for future work. Lastly, advances towards reduction theorems in this framework could provide deeper insights into the structure and applications of multicontact field theories.

%%%%%%%%%%%%%%%%%%%%%%%%%%%%%%%%%%%%%%%%%%%%%%%%%%%
\appendix
%%%%%%%%%%%%%%%%%%%%%%%%%%%%%%%%%%%%%%%%%%%%%%%%%%%

\section{Multivector fields on manifolds and fiber bundles}\label{append}

This appendix provides a review on multivector fields (see, for instance, \cite{BCGGG_91,CIL_96,EMR_98} for more details).

Let $\mathcal{M}$ be a $N$-dimensional differentiable manifold.
The \dfn{$m$-multivector fields}
 in $\mathcal{M}$ ($m\leq N$)
are the skew-symmetric contravariant tensor fields of order $m$ in $\mathcal{M}$
or, what is the same thing, sections of the $m$-multitangent bundle
$\bigwedge^m\Tan\mathcal{M}:=\Tan\mathcal{M}\wedge\overset{m}{\dotsb}\wedge\Tan\mathcal{M}$.
The set of them is denoted $\vf^m (\mathcal{M})$.

If $\mathbf{X}\in\mathfrak{X}^m(\mathcal{M})$,
for every point $\mathrm{p}\in \mathcal{M}$,
there is an open neighbourhood $U_\mathrm{p}\subset \mathcal{M}$ and
$X_1,\ldots ,X_r\in\mathfrak{X}(U_\mathrm{p})$ such that,
for $m \leq r\leq\dim\mathcal{M}$,
$$
\mathbf{X}\vert_{U_{\mathrm{p}}}=\sum_{1\leq i_1<\ldots <i_m\leq r} f^{i_1\ldots i_m}X_{i_1}\wedge\dotsb\wedge X_{i_m} \, ,
$$
with $f^{i_1\ldots i_m} \in \Cinfty(U_\mathrm{p})$.
In particular, $\mathbf{X}\in\vf^m(\mathcal{M})$ is said to be a
\dfn{locally decomposable multivector field} if
there exist $X_1,\ldots ,X_m\in\vf(U_\mathrm{p})$ such that $\mathbf{X}\vert_{U_\mathrm{p}}=X_1\wedge\dotsb\wedge X_m$.

Given a decomposable multivector field $\bfX = X_1\wedge\dotsb\wedge X_m$ with local expression
$$ X_\alpha = X_\alpha^i\parder{}{x^i}\,, $$
we say that a map $\psi\colon\R^m\to {\cal M}$ is an \dfn{integral map} of $\bfX$ if it satisfies the set of partial differential equations
\begin{equation}\label{eq:pde-multi}
    \parder{\psi^i}{t^\alpha} = X_\alpha^i\circ\psi\,.
\end{equation}

Locally decomposable $m$-multivector fields are locally associated with $m$-dimensional
distributions $D\subset\Tan \mathcal{M}$.
This splits the set of locally decomposable multivector fields into {\sl equivalence classes}, $\{\bfX\}\subset\vf^m_d(\mathcal{M})$ 
which are made of the locally decomposable multivector fields associated with the same distribution.
If $\bfX ,\bfX '\in\{ \bfX \}$ then, for $U\subset \mathcal{M}$,
there exists a non-vanishing function $f\in\Cinfty(U)$ such that 
$\bfX '=f\bfX $ on $U$.
In addition, an \dfn{integrable multivector field}
is a locally decomposable multivector field whose associated distribution is integrable; that is, involutive. Suppose we want the multivector field to have integral maps. In that case, we need to impose the stronger integrability condition $[X_\alpha, X_\beta] = 0$ for every $\alpha,\beta = 1,\dotsc,m$, which is precisely the integrability condition of the system of PDEs \eqref{eq:pde-multi} \cite{Lee_12}.

In particular, let $\varrho\colon\mathcal{M}\to M$ be a fiber bundle.
A multivector field $\mathbf{X}\in\mathfrak{X}^m(\mathcal{M})$ is \emph{$\varrho$-transverse} if,
for every $\beta\in\Omega^m(M)$ such that $\beta_{\varrho(\mathrm{p})}\neq 0$,
at every point $\mathrm{p}\in \mathcal{M}$, we have that
$(\inn{\bfX}(\varrho^*\beta))_{\mathrm{p}}\neq 0$.
Then, if $\mathbf{X}\in\mathfrak{X}^m(\mathcal{M})$ is integrable and $\varrho$-transverse, 
its integral manifolds are local sections of the projection $\varrho$.

If $\alpha\in\df^k(\mathcal{M})$ and $\mathbf{X}\in\mathfrak{X}^m(\mathcal{M})$,
the \dfn{contraction} between $\bfX $ and $\alpha$ is the natural contraction between tensor fields; in particular, it is zero when $k<m$ and, if $k\geq m$,
$$
 \inn{\bfX}\alpha\mid_{U}:= \sum_{1\leq i_1<\dotsb <i_m\leq
 r}f^{i_1\dotsb i_m} \inn{X_1\wedge\dotsb\wedge X_m}\alpha 
=
 \sum_{1\leq i_1<\dotsb <i_m\leq r}f^{i_1\ldots i_m} \inn{X_m}\dotsb\inn{X_1}\alpha \,.
$$

The \dfn{Lie derivative} of $\alpha$ with respect to $\bfX $ is the graded bracket (of degree $m-1$)
$$
\Lie_\bfX \alpha:=[\d , \inn{\bfX}]\alpha=
(\d\inn{\bfX} - (-1)^m\inn{\bfX}\d)\alpha \,.
$$

If $\bfX\in\vf^m(\mathcal{M})$ and $\bfY\in\vf^n(\mathcal{M})$,
the \dfn{Schouten--Nijenhuis bracket} of $\bfX ,{\bf Y}$  \cite{KMS_93,Mar_97} is
the graded commutator of $\Lie_\bfX$ and $\Lie_\bfY$; that is,
the multivector field $[\bfX,\bfY]\in\vf^{m+n-1}(\mathcal{M})$, such that,
\begin{equation}
\label{liebrac}
\Lie_{[\bfX,\bfY]}=[\Lie_\bfX,\Lie_\bfY]=\Lie_\bfX\Lie_\bfY-\Lie_\bfY\Lie_\bfX \,.
\end{equation}
For (locally) decomposable multivector fields, it is defined as follows: 
given $\bfX=X_1\wedge\cdots\wedge X_m$ and $\bfY=Y_1\wedge\cdots\wedge Y_n$, their Schouten--Nijenhuis bracket is
\begin{equation}
[\bfX,\bfY]:=\sum_{i,j}(-1)^{i+j}[X_i,Y_j]\wedge 
X_1\wedge\cdots\wedge X_{i-1}\wedge X_{i+1}\wedge\cdots\wedge X_m\wedge Y_1\wedge\cdots\wedge Y_{j-1}\wedge Y_{j+1}\wedge\cdots\wedge Y_n \,,    
\end{equation}
and it extends linearly to any multivector fields in general. Let us recall that the following properties hold \cite[Proposition  3.1]{Mar_97}:
given $\mathbf{P}\in \mathfrak{X}^m(M)$, $\mathbf{Q}\in\mathfrak{X}^n(M)$, and $R\in\mathfrak{X}(M)$, then,
\begin{equation}
[\mathbf{P},\mathbf{Q}]=-(-1)^{(m-1)(n-1)}[\mathbf{Q},\mathbf{P}]\,,\label{MarleMultiDer1}
\end{equation}
and
\begin{equation}
    [\mathbf{P},\mathbf{Q}\wedge R]=[\mathbf{P},\mathbf{Q}]\wedge R+(-1)^{(m-1)n}\mathbf{Q}\wedge[\mathbf{P},R]\,.         
\label{MarleMultiDer2}
\end{equation}
Another useful property is the following:
\begin{proposition}
\label{Prop::A1}
    Let $\mathbf{X}\in \mathfrak{X}^k(M)$ and $Y\in\mathfrak{X}(M)$. Then,
    \[
    \iota_{[Y,\mathbf{X}]}=\Lie_Y\iota_{\mathbf{X}}-\iota_{\mathbf{X}}\Lie_Y\,.
    \]
\end{proposition}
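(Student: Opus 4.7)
The identity is an operator equality on differential forms, so my plan is to reduce it to a computation on locally decomposable multivector fields and proceed by induction on $k$, using the Leibniz rule for the Schouten–Nijenhuis bracket (equation \eqref{MarleMultiDer2}) and the convention for the interior product recorded in the appendix, namely $\iota_{X_1\wedge\cdots\wedge X_m}=\iota_{X_m}\cdots\iota_{X_1}$, which in particular gives $\iota_{\mathbf{A}\wedge\mathbf{B}}=\iota_{\mathbf{B}}\iota_{\mathbf{A}}$ for any decomposable $\mathbf{A},\mathbf{B}$, and then extends by $\Cinfty(M)$-linearity to arbitrary multivector fields.

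For the base case $k=1$, the statement reduces to the classical identity $\iota_{[Y,X]}=\Lie_Y\iota_X-\iota_X\Lie_Y$ for two vector fields, which is a standard consequence of the graded commutator $[\Lie_Y,\iota_X]=\iota_{[Y,X]}$; I would just cite it. For the inductive step, I would write locally $\mathbf{X}=\mathbf{X}'\wedge X_k$ with $\mathbf{X}'\in\mathfrak{X}^{k-1}(M)$ decomposable and $X_k\in\mathfrak{X}(M)$, and apply \eqref{MarleMultiDer2} with $\mathbf{P}=Y$ ($m=1$), $\mathbf{Q}=\mathbf{X}'$ ($n=k-1$), $R=X_k$, which yields
\begin{equation}
[Y,\mathbf{X}'\wedge X_k]=[Y,\mathbf{X}']\wedge X_k+\mathbf{X}'\wedge[Y,X_k],
\end{equation}
with no sign since $(-1)^{(1-1)(k-1)}=1$. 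Applying $\iota_{(\cdot)}$ and using $\iota_{\mathbf{A}\wedge\mathbf{B}}=\iota_{\mathbf{B}}\iota_{\mathbf{A}}$ gives
\begin{equation}
\iota_{[Y,\mathbf{X}]}=\iota_{X_k}\iota_{[Y,\mathbf{X}']}+\iota_{[Y,X_k]}\iota_{\mathbf{X}'}.
\end{equation}

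Now I would substitute the inductive hypothesis $\iota_{[Y,\mathbf{X}']}=\Lie_Y\iota_{\mathbf{X}'}-\iota_{\mathbf{X}'}\Lie_Y$ and the base case $\iota_{[Y,X_k]}=\Lie_Y\iota_{X_k}-\iota_{X_k}\Lie_Y$ and expand:
\begin{align}
\iota_{[Y,\mathbf{X}]}&=\iota_{X_k}\bigl(\Lie_Y\iota_{\mathbf{X}'}-\iota_{\mathbf{X}'}\Lie_Y\bigr)+\bigl(\Lie_Y\iota_{X_k}-\iota_{X_k}\Lie_Y\bigr)\iota_{\mathbf{X}'}\\
&=\iota_{X_k}\Lie_Y\iota_{\mathbf{X}'}-\iota_{X_k}\iota_{\mathbf{X}'}\Lie_Y+\Lie_Y\iota_{X_k}\iota_{\mathbf{X}'}-\iota_{X_k}\Lie_Y\iota_{\mathbf{X}'}\\
&=\Lie_Y\iota_{X_k}\iota_{\mathbf{X}'}-\iota_{X_k}\iota_{\mathbf{X}'}\Lie_Y=\Lie_Y\iota_{\mathbf{X}}-\iota_{\mathbf{X}}\Lie_Y,
\end{align}
which is the claim. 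Finally, since both sides of the identity are $\Cinfty(M)$-linear in $\mathbf{X}$ and every $m$-multivector field on $\mathcal{M}$ is locally a sum of decomposable ones, the identity extends to arbitrary $\mathbf{X}\in\mathfrak{X}^k(M)$.

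The only mild subtlety — and the main place I would be careful — is the sign and ordering in $\iota_{\mathbf{A}\wedge\mathbf{B}}=\iota_{\mathbf{B}}\iota_{\mathbf{A}}$ under the convention $\iota_{X_1\wedge\cdots\wedge X_m}=\iota_{X_m}\cdots\iota_{X_1}$ chosen in the appendix; once this is fixed, all terms combine cleanly because $Y$ has degree one, so no graded signs from $\Lie_Y$ intervene. No tangential integrability or transversality hypothesis is needed; this is a purely tensorial/algebraic identity.
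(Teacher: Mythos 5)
Your proof is correct and follows essentially the same route as the paper's: induction on $k$ for locally decomposable multivector fields, using the Leibniz rule \eqref{MarleMultiDer2} for the Schouten--Nijenhuis bracket together with the classical $k=1$ identity, and then extending by local linear combinations of decomposables (you are in fact slightly more careful than the paper about the ordering convention $\iota_{\mathbf{A}\wedge\mathbf{B}}=\iota_{\mathbf{B}}\iota_{\mathbf{A}}$). The only nitpick is that neither side is literally $\Cinfty(M)$-linear in $\mathbf{X}$ (both obey the same Leibniz-type rule under $\mathbf{X}\mapsto f\mathbf{X}$), but since each local summand $f^{i_1\ldots i_k}X_{i_1}\wedge\cdots\wedge X_{i_k}$ is itself decomposable, additivity suffices and the conclusion stands.
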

\begin{proof}
Note that the statement holds if $k=1$.
For locally decomposable multivector fields \(\mathbf{X}=X_1\wedge\ldots\wedge X_k\), 
the proof is performed using induction with respect to $k$. 
Additionally, let $\widetilde{\mathbf{X}}=X_1\wedge\ldots\wedge X_k\wedge X_{k+1}$. 
Then, using \eqref{MarleMultiDer1} and \eqref{MarleMultiDer2}, it follows that
\begin{align}
\iota_{[Y,\widetilde{\mathbf{X}}]}&=
\iota_{[Y,\mathbf{X}\wedge X_{k+1}]}=\iota_{[Y,\mathbf{X}]\wedge X_{k+1}}+\iota_{\mathbf{X}\wedge [Y,X_{k+1}]}=\iota_{[Y,\mathbf{X}]}\iota_{ X_{k+1}}+\iota_{\mathbf{X}}\iota_{[Y,X_{k+1}]}
\\ &=\left(\Lie_Y\iota_{\mathbf{X}}-\iota_{\mathbf{X}}\Lie_Y\right)\iota_{X_{k+1}}+\iota_{\mathbf{X}}\left(\Lie_Y\iota_{X_{k+1}}-\iota_{X_{k+1}}\Lie_Y\right)=\Lie_Y\iota_{\widetilde{\mathbf{X}}}-\iota_{\widetilde{\mathbf{X}}}\Lie_Y\,,
\end{align}
and the statement follows, since it can be locally extended linearly to any multivector fields in general, not only decomposable ones.
\end{proof}

%%%%%%%%%%%%%%%%%%%%%%%%%%%%%%%%%%%%%%%%%%%%%%%%%%%%%%%%
\addcontentsline{toc}{section}{Acknowledgements}
\section*{Acknowledgements}

We acknowledge the financial support of the 
{\sl Ministerio de Ciencia, Innovaci\'on y Universidades} (Spain), projects PID2021-125515NB-C21, and RED2022-134301-T of AEI,
and Ministry of Research and Universities of
the Catalan Government, project 2021 SGR 00603 \textsl{Geometry of Manifolds and Applications, GEOMVAP}. B.M. Zawora acknowledges funding from the IDUB mikrogrant program to accomplish a research stay at the UPC in Barcelona.

%%%%%%%%%%%%%%%%%%%%%%%%%%%%%%%%%%%%%%%%%%%%%%%%%%%%%%%%
\bibliographystyle{abbrv}
{\small
\bibliography{references.bib}
}

\end{document}